\theoremstyle{plain}
\newtheorem{thm}{Theorem}
\def\theequation{\arabic{section}.\arabic{equation}}
\newcommand{\be}{\begin{eqnarray}}
\newcommand{\ee}{\end{eqnarray}}
\newcommand{\nn}{\nonumber \\}
\newcommand{\lb}{\label}
\newcommand{\p}[1]{(\ref{#1})}
\begin{document}

\begin{titlepage}

\vspace*{0.2cm}

\renewcommand{\thefootnote}{\star}
\begin{center}

{\LARGE\bf  ${\cal N}{=}\,4$ mechanics with diverse  ${\bf (4, 4, 0)}$ multiplets: }\\

\vspace{0.5cm}

{\LARGE\bf  explicit examples of }\\

\vspace{0.5cm}

{\LARGE\bf  HKT, CKT, and OKT geometries }

\vspace{1.5cm}
\renewcommand{\thefootnote}{$\star$}

{\large\bf Sergey~Fedoruk},\footnote{\,\,On leave of absence
from V.N.\,Karazin Kharkov National University, Ukraine}
\quad {\large\bf Evgeny~Ivanov}
 \vspace{0.5cm}

{\it Bogoliubov Laboratory of Theoretical Physics, JINR,}\\
{\it 141980 Dubna, Moscow region, Russia} \\
\vspace{0.1cm}

{\tt fedoruk,eivanov@theor.jinr.ru}\\
\vspace{0.7cm}

{\large\bf Andrei~Smilga} \vspace{0.5cm}

{\it SUBATECH, Universit\'e de Nantes,}\\
{\it 4 rue Alfred Kastler, BP 20722, Nantes 44307, France;}\\
{\it On leave of absence from ITEP, Moscow, Russia} \\
\vspace{0.1cm}

{\tt smilga@subatech.in2p3.fr}\\

\end{center}
\vspace{0.2cm} \vskip 0.6truecm \nopagebreak

\begin{abstract}
\noindent
We present simple models of ${\cal N}{=}\,4$ supersymmetric mechanics with ordinary and mirror linear ({\bf 4, 4, 0})
multiplets
that give a transparent description of HKT, CKT, and OKT geometries. These models are treated in the ${\cal N}{=}\,4$ and ${\cal N}{=}2\,$ superfield approaches,
as well as in the component approach. Our study makes manifest that the
CKT and OKT supersymmetric sigma models are distinguished from the more simple HKT models by the presence of
extra holomorphic torsions in the supercharges.

\end{abstract}

\vspace{1cm}
\bigskip
\noindent PACS: 11.30.Pb, 12.60.Jv, 03.65.-w, 03.70.+k, 04.65.+e

\smallskip
\noindent Keywords: sigma-model, supersymmetric mechanics, K\"ahler geometry, torsions \\
\phantom{Keywords: }

\newpage

\end{titlepage}

\setcounter{footnote}{0}

\setcounter{equation}0
\section{Introduction}
\subsection{SQM models and geometry}
It is known since \cite{Witten} that supersymmetric quantum mechanical (SQM)  sigma models describing the motion of a supersymmetric
particle on a curved manifold can be mapped to certain geometrical structures referred to as {\it complexes}.
This nice  mathematical interpretation of these models, combined with their use in a more physical context, e.g., with the fact that they
give rise to the effective theories of black holes (see, for example, \cite{MStrom,MS2,Britto,GutP1,GutP2} and relevant references),
make them an interesting subject for the intensive study which lasts for more than 30 years.

The simplest such model involving $D$ real  ({\bf 1, 2, 1}) multiplets $X^\mu$
 \footnote{We follow the notation of \cite{PT} such that the numerals count the numbers of the physical bosonic,
physical fermionic and auxiliary bosonic fields.}
and the action
\be
\label{Rham}
S \ = \  -\frac 12 \int dtd\bar\theta d\theta  \, g_{\mu\nu}(X) DX^\mu
\bar D X^\nu
    \ee
(with $g_{\mu\nu} = g_{\nu\mu}$) corresponds to the de Rham complex. Another basic model involves $d$ complex chiral ({\bf 2, 2, 0}) multiplets $Z^j$ ($\bar D Z^j = 0$),
has the action \cite{Hull,IS}
\be
\lb{Dolb}
 S\  = -\frac{1}{4}  \int dt d^2\theta \,
h_{j\bar k}(Z, \bar Z)\, D Z{\,}^j \bar D\bar Z{\,}^{\bar k} \, ,
\ee
and
describes the Dolbeault complex. These models possess the simplest ${\cal N} = 2$ supersymmetry~\footnote{${\cal N}$
counts the number of real supercharges.}.

The action \p{Rham} can be generalized to include an extra
 potential \cite{Witten} and/or torsions \cite{Kimura,FIS}.
An analogous generalization of the action \p{Dolb} includes extra gauge fields
\cite{Hull,IS} and/or holomorphic torsions \cite{Hull,FIS}. Gauge
fields are described by the superfield Wess-Zumino term
 \be
\lb{WZ}
S_{\rm gauge} \ =\ \int \, dt d^2\theta \, W(Z^j, \bar Z^j) \, .
 \ee
 The Dolbeault models with holomorphic torsions
 are obtained when one adds to the action \p{Dolb} the terms
 \be
\lb{holtorsion}
\Delta S \sim \ \int \, dt d^2\theta \,  {\cal B}_{jk} D Z^j DZ^k \ +\ {\rm c.c.}
 \ee
with complex antisymmetric ${\cal B}_{jk}$. The holomorphic exterior derivative of the 2-form ${\cal B}$ gives
the holomorphic torsion 3-form.

In certain cases, the models \p{Rham}, \p{Dolb} and their generalizations enjoy extended supersymmetries. It is well known, e.g.,  that
the   ${\cal N} = 2$ supersymmetry of the real sigma
model \p{Rham} can be extended to  ${\cal N} = 4$ if the manifold is K\"ahler
\cite{Zumino} and to  ${\cal N} = 8$ if the manifold is hyper-K\"ahler \cite{A-GF}. The emerging mathematical constructions can be called
{\it K\"ahler - de Rham} complex and {\it hyper-K\"ahler - de Rham} complex.

Likewise, the supersymmetry of the complex model \p{Dolb} can be extended to  ${\cal N} = 4$
when the metric is hyper-K\"ahler or has the so called HKT form
\cite{HoweP,GPS,Hull,Wipf}.
It was also  noted in \cite{HoweP,GPS,Hull} that  complex sigma models admit an
 extended ${\cal N}{=}\,4$ supersymmetry for  a geometry more general than HKT.
This geometry with certain relaxed (compared to HKT) conditions for complex structures
has been termed CKT  in \cite{Hull}.
For some special CKT metrics (the so called OKT metrics), the
models enjoying extended ${\cal N} = 8$ supersymmetry can be written.

Before explaining (we will do it shortly) what exactly HKT, CKT and OKT mean,
let us say a few words about terminology.
The abbreviation HKT stands for the {\it Hyper-K\"ahler with Torsions} geometry.
It is worth mentioning that, generically, the HKT manifolds
are {\it not} hyper-K\"ahler and not even K\"ahler. Their characteristic feature is the presence of three complex
structures which form a quaternion algebra
and are covariantly constant with respect to the appropriate connection with torsion.
Likewise, CKT means {\it Clifford K\"ahler with Torsions}.
This geometry is characterized by three complex structures which form the Clifford algebra but,
in general, not the quaternion one.
Finally, OKT ({\it Octonionic K\"ahler with Torsions}) manifolds are the manifolds of a
dimension which is an integer multiple of 8, such that their geometry
involves seven
 different complex structures forming the $7D$
 Clifford algebra. These structures reveal some relation to the octonion algebra, though do not satisfy it.
Neither CKT nor OKT manifolds are K\"ahler. Despite this mismatch (and the similar one for the generic HKT manifolds), we will follow the established literature tradition and use
the names HKT, CKT and OKT  for the considered type of geometries.

The HKT geometry introduced in \cite{HoweP,GPS} is by now well understood by mathematicians
\cite{Verb}.
 This could not be really
said, however, about the CKT and OKT geometries. One of the aims of our paper is to treat in detail
some simple particular examples of the CKT and OKT manifolds to make  more transparent
 their mathematical structure.

  \subsection{Overview of definitions and motivations}

Before proceeding with explicit calculations, let us make some preliminary remarks elucidating the
geometric structures encoded  in the HKT, CKT, and OKT SQM models and explaining why these models are worthy
to study.

The mathematical definition of the HKT geometry is the following \cite{Verb}.

\vspace{1mm}

HKT manifold is a manifold of dimension $D = 4n$ involving three different
integrable
\footnote{{\it Integrable} means that complex coordinates  can be introduced such that the metric has a Hermitian
form $ds^2 = 2h_{j \bar k} dz^j d\bar z^{\bar k}$. This is possible when the
{\it Nijenhuis tensor} vanishes. The latter condition can be written as
 \be
\lb{integrability}
\partial_{[\mu} (I^a)_{\nu]}^{\ \lambda} \ =\ (I^a)_\mu^{\ \sigma}    (I^a)_\nu^{\ \rho } \partial_{[\sigma} (I^a)_{\rho]}^{\ \lambda} \, ,
 \ee
where one  can equivalently put the  Levi-Civita covariant derivatives in place of the usual ones.
When \p{integrability} holds, the holomorphic exterior derivatives associated with the complex structures $I^a$ are nilpotent and can be interpreted
as supercharges.}
  antisymmetric complex structures $(I^a)_{\mu\nu} = -(I^a)_{\nu\mu}$ which satisfy the quaternion
algebra
 \be
\lb{quaternion}
(I^a)_\mu^{\ \lambda} (I^b)_\lambda^{\ \nu}  = -\delta^{ab} \delta_\mu^\nu + \epsilon^{abc} (I^c)_\mu^{\ \nu} \, .
 \ee
The standard torsionless Levi-Civita covariant derivatives of the complex structures $\nabla_\lambda I^a$ do not necessarily vanish
(if they do, this is a hyper-K\"ahler manifold).
However, for any complex structure $I$, one can always define a class of torsionful affine connections with respect to which both the metric
and this complex structure
are covariantly constant, $\tilde {\nabla}_\lambda g_{\mu\nu} = \tilde{\nabla}_\lambda (I)_{\mu\nu} = 0$.
If we also require  the torsion tensor $C_{\mu\nu\lambda}$ to be totally antisymmetric, such connection
is unique and is called {\it Bismut connection} \cite{Bismut}. In what follows, we will denote the torsion
entering the Bismut connection as $B_{\mu\nu\lambda}$. The explicit expression of the Bismut torsion tensor
for the complex structure $I$ is \cite{Hull}
 \be
\label{Bismut}
B_{\mu\nu\lambda} \ =\ I_\mu^{\ \alpha}  I_\nu^{\ \beta}
I_\lambda^{\ \gamma} (\nabla_\alpha I_{\beta\gamma} +   \nabla_\beta I_{\gamma\alpha} + \nabla_\gamma I_{\alpha\beta} )=
-\left(I_\mu^{\ \alpha} \nabla_\alpha I_{\nu\lambda} +
I_\nu^{\ \alpha}\nabla_\alpha I_{\lambda\mu} + I_\lambda^{\ \alpha}\nabla_\alpha I_{\mu\nu} \right)\, ,
 \ee
where $\nabla_\alpha$ are the Levi-Civita covariant derivatives.
This representation for $B_{\mu\nu\lambda}$ can be directly derived
from the condition of the covariant constancy for $I$ and
$g_{\mu\nu}$ with the Bismut connection. Note that the (anti)holomorphic
projections of the tensor \p{Bismut} vanish,
 \be
\label{holproj}
   (P^\pm)_\mu^{\ \alpha} (P^\pm)_\nu^{\ \beta}
(P^\pm)_\lambda^{\ \gamma} \, B_{\alpha\beta\gamma} \ =\ 0
 \ee
with  $(P^\pm)_\mu^{\ \alpha} = (\delta_\mu^\alpha \pm i I_\mu^{\ \alpha})/2$. Equivalently, this property
can be formulated in terms of the tensor $H_{\mu\nu\lambda}(C,I)$ defined as
\be
\lb{holdobavka}
H_{\mu\nu\lambda}(C,I) \ :=\  \frac 14\left( C_{\mu\nu\lambda}
-   I_\mu^{\ \rho}  I_\nu^{\ \sigma}  C_{\rho \sigma \lambda} -
I_\lambda^{\ \rho}  I_\mu^{\ \sigma}  C_{\rho \sigma\nu} -
I_\nu^{\ \rho}  I_\lambda^{\ \sigma}  C_{\rho \sigma \mu} \right).
\ee
Then eqs. \p{holproj} amount to the condition
\be
\quad H_{\mu\nu\lambda}(B,I) \ =\ 0\,.
 \ee
The origin of the coefficient $1/4$ in the definition of $H_{\mu\nu\lambda}(C,I)$ will be explained in Sect. 4.5; it ensures the projector-like property
$ H_{\mu\nu\lambda}(H,I) = H_{\mu\nu\lambda}\,$.

{}For an HKT manifold, the Bismut connections for all three structures $I^a$ {\it coincide}
and the covariant constancy condition can be written as
\be
 \lb{HKT}
\tilde{\nabla}_\lambda I^a_{\,\mu\,}{}^\nu  = 0  \, .
 \ee
One can then define   derivatives holomorphic with respect to these complex structures  and twisted by the presence of some particular extra gauge fields.
These objects can be shown to satisfy the ${\cal N}=4, d=1$ superalgebra \cite{Verb}.

A CKT manifold is a manifold with three antisymmetric integrable complex structures,
which do not necessarily satisfy the quaternion algebra \p{quaternion}.
They are required, however,
to form the {\it Clifford} algebra
  \be
 \lb{Clifford}
I^a I^b + I^b I^a   = - 2\delta^{ab}  \, .
 \ee
Further, an affine connection with the totally antisymmetric torsion tensor is
required to exist, such that the complex
 structures satisfy the conditions which are weaker than \p{HKT}:
\begin{equation}
\label{w-cov-compl}
\tilde\nabla_{(\lambda} I^{{a}}_{\mu)}{}^{\nu}=0\,,
\end{equation}
where $(\,\,)$ means symmetrization of the indices $\lambda, \mu\,$.

In order that the supercharges associated with the complex structures $I^a$ form the supersymmetry algebra,
two additional requirements are needed. First, the {\it Nijenhuis concomitants}
 \be
\lb{concom}
{\textstyle\frac{1}{2}}\, N(a,b)_{\mu\nu}^\lambda \ =\ \left\{ (I^a)_{[\mu}^{\ \sigma} \partial_\sigma (I^b)_{\nu]}^{\ \lambda} -
\partial_{[\mu} (I^b)_{\nu]}^{\ \sigma}  (I^a)_\sigma^{\ \lambda}  \right\} \ + (a \leftrightarrow b)
 \ee
should vanish,
\be
\lb{concom1}
N(a,b)_{\mu\nu}^\lambda \ = 0\,.
 \ee

Second, the torsion tensor should satisfy the relation
\begin{equation}
\lb{IdC}
(I^a)^\tau{}_{[\mu} \partial_{|\tau|} C_{\nu\lambda\rho]}=
(I^a)^\tau{}_{[\mu} \partial_{\nu} C_{\lambda\rho]\tau}+2C_{\tau[\mu\nu} \partial_{\lambda}(I^a)^\tau{}_{\rho]} ,
\end{equation}
which can be concisely rewritten as \cite{GPS,Hull}
\begin{equation}\label{eq-cs1a}
\iota_a d C={\textstyle\frac{2}{3}}\,d (\iota_a C)\,.
\end{equation}
 The operator $\iota_a$ acts on a generic $n$-form $\omega$ according to the rule:
 \be
\lb{iotaHull}
&&{\rm if} \ \ \ \ \ \ \ \ \ \ \omega \ =\ \frac 1{n!} \omega_{\mu_1 \ldots \mu_n} dx^{\mu_1} \wedge \cdots \wedge dx^{\mu_n}\,, \nn
&&{\rm then} \ \ \ \ \iota_a \omega \ =\ \frac 1{(n-1)!} \omega_{\nu [\mu_2 \ldots \mu_{n-1}} (I^a)^{\nu}_{\ \mu_1]}\,.
 dx^{\mu_1} \wedge \cdots \wedge dx^{\mu_n}\,.
 \ee
For a form $\omega_{p,q}$ with $p$ holomorphic (with respect to the complex structure $I^a$) and $q$ antiholomorphic indices, the action
of $\iota_a$ is reduced to the multiplication by $i(p-q)$.

\vspace{1mm}

An OKT manifold is a manifold with {\it seven} integrable complex structures
which satisfy the Clifford algebra and all the additional conditions listed above.

This definition of the CKT and OKT manifolds, though being of general use, looks less transparent and geometrical  as compared, e.g., with
that of  the HKT manifolds which involves, as the main condition, the coincidence of the Bismut connections for all three relevant
complex structures. In Sect. 5 we will suggest another equivalent definition of the CKT and OKT manifolds based on a generalization
of this coincidence condition.

A comment is to the point here. Given three complex structures $I^a$ satisfying the Clifford algebra \p{Clifford},
 one can find, in its enveloping algebra, other triplets of complex structures forming the quaternion algebra \p{quaternion} \cite{GPS,Hull}.
For instance, the tensors
 \be
  \lb{J}
J^a=\frac12\epsilon^{abc}I^b I^c
 \ee
form the quaternion algebra \p{quaternion} and one might think that they define some HKT geometry.
This is not so, however, since there exists no connection with respect to which $J^a$ would be covariantly constant (since no such a connection
exists for $I^a$ in the CKT case).


\vspace{1mm}
It was shown in Ref. \cite{DI} that  the most general target HKT geometry  is reproduced in the framework
of the  superfield approach which is natural for physicists. In a generic case, one should use \cite{IL} the harmonic superspace approach \cite{GIOS},
but a wide class of the HKT metrics can be also obtained from an ordinary extended superspace  action
 describing the interactions of several {\it linear} root ({\bf 4, 4, 0})
multiplets of one sort \footnote{ The multiplet ({\bf 4, 4, 0}) is sometimes called {\it root} \cite{FG} because many other
${\cal N}=4, d=1$ multiplets can be obtained from it
 in the framework of the Hamiltonian reduction procedure \cite{BKMO} (or its Lagrangian version based on gauging some bosonic isometries \cite{DIgaug}).
For a detailed discussion, see recent \cite{taming}.
 Nonlinear ({\bf 4, 4, 0}) multiplets subject to some nonlinear constraints were defined and studied in \cite{DI}. Though they are necessary for constructing
 the general class of HKT ${\cal N}=4, d=1$ sigma models (including HK models as a subclass),  we will not come to grips
with this issue here.}.
 It was also demonstrated in \cite{DI} that ${\cal N}=4$ sigma models with more general  CKT target geometry
 arise,  when simultaneously including the so called ``mirror''  ({\bf 4, 4, 0}) multiplets, which
have slightly  different transformation laws under the ${\cal N}=4$ supersymmetry.
Thus, the minimal dimension of the bosonic target space of a CKT model is eight.
The corresponding general Lagrangian involving both types of the root multiplets  was constructed in \cite{DI} in terms of ${\cal N}=2$ superfields.
However, neither explicit component examples of the CKT systems, nor the relevant classical and quantum N\"other supercharges,  were given there.
On the other hand, the superfield and component Lagrangians for a particular
subclass of ${\cal N}=4, d=1$ sigma models with the set of two mutually mirror  linear ({\bf 4, 4, 0}) multiplets
were earlier constructed in \cite{ILS}. It was further shown there, that, under certain restrictions on the
superfield Lagrangian, this system possesses an extended ${\cal N}=8$ supersymmetry \cite{BIKL,ABC,Franc},
providing an example of the OKT  geometry. To better understand the interplay between HKT, CKT and OKT geometries,
one has to further elaborate  on these Lagrangians at the component level, to study the structure of the corresponding N\"other
supercharges  and to finally find out  how the differences between various target  geometries manifest themselves within this setting.

This is the basic purpose of the present paper.
To understand the difference between various geometries, we consider, in parallel with the model
based on two mutually mirror $({\bf 4, 4, 0})$  multiplets, also the system with two $({\bf 4, 4, 0})$ multiplets of the same kind.
We will explicitly show how the CKT structures arise in the first case, while the second system brings about the more familiar HKT geometry.

The pretty simple models with ${\cal N} = 4$ supersymmetry we are going to study here could also have a quite nice physical interpretation.
The bosonic target dimension of these ${\cal N} = 4$ models is eight and they can be constructed through switching on the appropriate mutual interaction
between two four-dimensional systems with HKT geometry. Since the latter models are known to describe the moduli space of five-dimensional black holes
with preservation of $1/4$ supersymmetry \cite{GPS}, the systems considered in the present paper can be interpreted as describing
moduli space of two interacting (in principle, non-identical) black holes, like as in ref. \cite{GutP2}.

\subsection{Supercharges in the $({\bf 4, 4, 0})$ SQM systems: a brief account}

Before giving the detailed exposition, here we explain the origin of different supersymmetric structures
 in general terms, without writing explicit formulas.

While considering these ({\bf 4, 4, 0}) multiplets separately, we have two sets of supersymmetry generators:
a set with the generators
$Q_{(1)}$ and $Q_{(1)}^a$ (where $a =1,2,3$) for one multiplet and
a set $Q_{(2)}$ and $Q_{(2)}^a$  for another multiplet
\footnote{For the reason to become clear later,
we keep manifest the diagonal $SU(2)$ in the product of two independent $SU(2)$ automorphism groups of ${\cal N}=4, d=1$
super Poincar\'e algebra. Respectively, the supercharges are divided into the singlet and triplet parts with respect to this explicit $SU(2)$.}.
Both sets $(Q_{(1)}$, $Q_{(1)}^a)$ and  $(Q_{(2)}$, $Q_{(2)}^a)$ form the ${\cal N}=4$ superalgebra on their own
and correspond to the HKT geometry.

Consider now a system   involving both multiplets.
At the first step, we may consider the free case.
Though it is almost trivial, the origin of different types of geometries in the considered
system can be understood already in this setting.
Indeed, the root multiplets do not involve auxiliary components. As a consequence, all supersymmetry transformations for the linear
multiplets are the same in both the free case and the interaction case.
We observe the following supersymmetries :
\begin{itemize}
\item Supersymmetry generated by
\begin{equation}
Q=Q_{(1)}+Q_{(2)}\,;
\end{equation}
\item Supersymmetry with the generators
\begin{equation}
\lb{Sa}
S^a=Q_{(1)}^a+Q_{(2)}^a\,;
\end{equation}
\item Supersymmetry generated by
\begin{equation}
\lb{Qa}
Q^a=Q_{(1)}^a-Q_{(2)}^a\,;
\end{equation}
\item A hidden ${\cal N}{=}\,4$ supersymmetry generated by certain supercharges
\begin{equation}
\lb{Qtilde}
\tilde Q\,,\qquad \tilde Q^a
\end{equation}
to be specified later.
The supersymmetry \p{Qtilde} mixes component fields from two different multiplets.
\end{itemize}

We may depict all  these supersymmetry algebras and their links to the different target geometries in the following way
\footnote{There is also the supersymmetry $S = Q_{(1)}- Q_{(2)}$ missing in the list above. But it is specific for the flat model whereas
the structures \p{sch} may survive, as we will see later, in interacting case.}
\begin{equation}\label{sch}
\lefteqn{\overbrace
{\phantom{\,\, S^{a}\quad\,\,\hat S\, Q \,\,}}^{{\cal N}=4\,({\rm HKT})}}
\,\, S^{a}\quad\quad \underbrace{\underbrace{\,\, Q \,\quad {\phantom{S_{A_A}}}\, Q^{a}\,}_{{\cal N}=4\,({\rm CKT})}
\quad\quad \tilde Q \quad\quad\,\, \tilde Q^{a}}_{{\cal N}=8\,({\rm OKT})}\,.
\end{equation}

Notice now that each supercharge in \p{Sa}, \p{Qa}, and \p{Qtilde} corresponds to some particular complex structure.
We will see that the structures corresponding to \p{Sa} form the quaternionic algebra \p{quaternion}, while the structures
corresponding to \p{Qa} do not satisfy \p{quaternion}, but only \p{Clifford}. Thus, the supercharges \p{Sa} are related to the HKT geometry while
the supercharges \p{Qa} --- to the CKT one.

For sure, in the non-interacting case, the metric is flat and the geometry is trivial. Still, the  distinction between
\p{Sa} and \p{Qa} that we observed helps
to understand what happens when the interaction between  two ${\cal N}{=}\,4$ multiplets
is switched on.
Actually, in an interacting system, only a part of the supersymmetries (\ref{sch}) survives. {\it Which} part --- it depends.
In the case of two  ${\cal N}{=}\,4$ multiplets of the same kind, there remains only ${\cal N}{=}\,4$ supersymmetry
with the generators  $Q$, $S^a$  revealing the target HKT geometry. In the case when one such ${\cal N}{=}\,4$ multiplet
interacts with the mirror  one,  we are left with the ${\cal N}{=}\,4$ supersymmetry
with the generators  $Q$, $Q^a$ and encounter the CKT geometry. In some cases, the supersymmetry involving both \p{Qa} and \p{Qtilde} persists,
and this is the OKT geometry.

\subsection{Plan}
The paper is organized as follows.

In Sect.\,2 we give the supersymmetric description of
${\cal N}{=}\,4$ $({\bf 4, 4, 0})$ multiplets, ordinary and mirror. We pass to the real component fields, which makes the picture more transparent.

In Sect.\,3,
we make precise the picture \p{sch} discussed above. To this end, we consider the systems with two non-interacting root multiplets,
of the same type and of different types. The component Lagrangians in these two cases are the same,
but the realizations of ${\cal N}=4$ supersymmetry are different.

In Sect.\,4, we consider nontrivial interacting models with two multiplets.
 We start with the  model involving two ordinary multiplets and describing an HKT sigma-model.
But our main goal is to describe the system with two mutually mirror multiplets
generating the CKT geometry. In particular, we rewrite this model in terms of ${\cal N}{=}\,2$ superfields
and find that the CKT  models provide a particular example of ${\cal N}{=}\,2$ superfield systems with external (anti)holomorphic
torsions which were considered in \cite{FIS}.
In contrast to HKT models, such models do not possess conserved fermion charges.
 We also show in which way, for certain special CKT models  with conformally flat 8-dimensional metric, the ${\cal N} = 4$
supersymmetry can be extended
to   ${\cal N} = 8\,$. This provides a nontrivial
 example of OKT geometry.

In Sect. 5, we construct the quantum supercharges for the models discussed in the preceding Sections and
demonstrate how this construction can be extended to a generic case of the CKT models.
This allows us to give a new geometric definition of the CKT (OKT) manifolds.

In Sect.\,6, some specific examples of 8-dimensional CKT and OKT manifolds are
discussed. Sect.\,7 is reserved for some concluding remarks.

\setcounter{equation}0
\section{(4, 4, 0) multiplets}
\subsection{Ordinary ({\bf 4, 4, 0}) multiplet}

We will use the ${\cal N}{=}\,4$ superspace with the coordinates $(t,\theta^{i{k}^\prime})$,
$(\overline{\theta^{i{k}^\prime}})= -\epsilon_{ij}\epsilon_{{k}^\prime{l}^\prime}\theta^{j{l}^\prime}   \equiv -\theta_{k'j} $.
The indices $i=1,2$ and $k^\prime=1,2$ are doublet indices of the ${\rm SU}_L(2)$ and ${\rm SU}_R(2)$ groups respectively, which
form the full automorphism group ${\rm SO}(4)={\rm SU}_L(2)\times{\rm SU}_R(2)$ of
the ${\cal N}{=}\,4$ superalgebra.
Covariant derivatives are
\begin{equation}\label{D}
D^{i{k}^\prime} = \frac{\partial}{\partial\theta_{{k}^\prime i}}+i\theta^{i{k}^\prime}\partial_t\,.
\end{equation}

The linear ({\bf 4, 4, 0}) multiplet is described by the (pseudo)real superfield $X^{i\alpha}(t,\theta)$,
$(\overline{X^{i\alpha}})=-\epsilon_{ij}\epsilon_{\alpha \beta}X^{j\beta} \equiv X_{\alpha i}$, subject to the constraints
\begin{equation}\label{const1}
D^{(i{j}^\prime}X^{k)\alpha} = 0\,
\end{equation}
(symmetrized over $i \leftrightarrow k$), where the index $\alpha=1,2$ is transformed by the additional Pauli-G\"{u}rsey ${\rm SU}(2)$ group
commuting with ${\cal N}{=}\,4$ supersymmetry transformations.

The solution of the off-shell constraints (\ref{const1}) reads
\begin{equation}\label{X}
X^{i\alpha} = x^{i\alpha}-\theta^{i{k}^\prime}\chi_{{k}^\prime}^{\alpha} +
i\theta^{i{k}^\prime}\theta_{{k}^\prime k }\dot x^{k\alpha} - {\textstyle\frac{i}{3}}\,\theta^{i{i}^\prime} \theta_{{i}^\prime k} \theta^{k{k}^\prime}
\dot\chi_{{k}^\prime}^{\alpha}
-{\textstyle\frac{1}{12}}\,\theta^{k{k}^\prime}  \theta_{{k}^\prime j}
\theta^{j{i}^\prime}  \theta_{{i}^\prime k} \, \ddot x^{\,i\alpha},
\end{equation}
and so it encompasses four real bosonic component  fields $(\overline{x^{i\alpha}})=-\epsilon_{ij}\epsilon_{\alpha \beta}x^{j\beta}$
and four real fermionic component fields
$(\overline{\chi^{i^\prime\alpha}})=-\epsilon_{i^\prime j^\prime}\epsilon_{\alpha \beta}\chi^{j^\prime\beta}$.

The superfield action
\begin{equation}\label{act1}
S_1 = \int dt d^4 \theta\, {\cal L}_1(X)\,,
\end{equation}
where $d^4 \theta \equiv - \frac{1}{24}\,D^{i{i}^\prime}D_{{i}^\prime k} D^{k{k}^\prime} D_{{k}^\prime i} $, yields the following component action
\footnote{It seems to have symmetry  $[SU(2)]^3$, but two of these $SU(2)$ symmetries (those realized on the indices
$i$ and $\alpha$) can be broken, since the factor $G_1$ is not required in general to respect any of them. One more $SU(2)$
(acting on the primed indices) affects only fermionic fields and is unbroken, even in the general actions of the (${\bf 4, 4, 0}$)
multiplets of the same sort \cite{DI}.}
\begin{equation}\label{act1-c}
S_1 = \int dt \,\Big[ -{\textstyle\frac{1}{2}}\,G_{1}\dot x^{i\alpha}\dot x_{\alpha i}+
{\textstyle\frac  {i}{4}}\,G_{1}  \dot\chi^{{i}^\prime\alpha} \chi_{\alpha i^\prime}+
i\,\dot x^i_\alpha R^{\alpha\beta} (\partial_{\beta i}G_{1}) +
{\textstyle\frac{1}{6}}\,(\triangle_x G_{1})\,R^{\alpha\beta}R_{\alpha \beta }\Big]\,.
\end{equation}
Here $G_{1}=\triangle_x{\cal L}_1(x)$ and
\begin{equation}\label{G1}
R^{\alpha\beta}  = R^{\beta \alpha}  =
 {\textstyle\frac{1}{4}}\,\chi_{{k}^\prime}^{\alpha}\chi^{{k}^\prime\beta}\, ,
\end{equation}
We use the following notations:
$\partial_{\alpha i}=\partial/\partial x^{i\alpha}$, $\triangle_x=\partial^2/\partial x^{i\alpha}\partial x_{\alpha i}$.

The N\"other charges of the supersymmetry transformations
\begin{equation}\label{susy-tr1}
\delta x^{i\alpha}= \varepsilon^{i{k}^\prime} \chi_{{k}^\prime}^{\alpha} \,,\qquad \delta\chi^{{i}^\prime\alpha}=-2i\varepsilon^{k{i}^\prime}
\dot x_k^{\alpha}
\end{equation}
produced by the standard realization of ${\cal N}=4$ supersymmetry in ${\cal N}=4$ superspace  are
\begin{equation}\label{susy-char1}
Q_i^{{j}^\prime}= \chi^{{j}^\prime\alpha } p_{\alpha i} + {\textstyle\frac{i}{12}}\,\,\
\chi^{{j}^\prime\beta } \chi_{\beta {k}^\prime} \chi^{{k}^\prime\alpha } (\partial_{\alpha i}G_{1}) \,,
\end{equation}
where $p_{\alpha i}=-G_{1}\dot x_{\alpha i}-i(\partial_{\beta i}G_{1})R_\alpha^\beta$ are canonical momenta of $x^{i\alpha}$.

Non-vanishing Poisson brackets are
\footnote{There are two ways to derive these expressions. First, one can introduce the tangent space
fermion variables and distinguish
carefully between $\psi^a$ (canonical coordinates) and $\bar\psi^a$ (canonical momenta)   such that
$\{\bar \psi^a,\, \psi^b\} = \delta^{ab}$ and all other brackets vanish. The variables $\chi^{i' \alpha}$ are expressed via
the flat fermion variables multiplied by the vielbeins, the derivatives of the latter bringing about nontrivial brackets
$\{\chi, p\}$.

 Alternatively, one can work with the original fermions
$\chi^{i' \alpha}$ carrying world indices. This
description involves a gauge-like redundancy in the  fermion sector: the Lagrangian involves only first derivatives
of $\chi^{i' \alpha}$ such that fermion canonical momenta are expressed via $\chi^{i' \alpha}$,
generating second class constraints. They can be resolved following the Dirac procedure and the expressions \p{DB-1} are derived
as  Dirac brackets.}
\begin{equation}\label{DB-1}
\{x^{i\alpha}, p_{\beta j}\}_{{}}=\delta^{i}_{j}\delta^{\alpha}_{\beta}\,,\qquad
\{\chi^{{i}^\prime\alpha }, \chi_{\beta {j}^\prime}\}_{{}}={\textstyle\frac{2i}{G_{1}}}\,\delta^{{i}^\prime}_{{j}^\prime}\delta^{\alpha}_{\beta}\,,\qquad
\{\chi^{{i}^\prime\alpha }, p_{\beta j}\}_{{}}=-{\textstyle\frac{1}{2G_{1}}}\,(\partial_{\beta j}G_{1})\,
\chi^{{i}^\prime \alpha} \,.
\end{equation}
Supercharges (\ref{susy-char1}) form the ${\cal N}=4$ supersymmetry algebra
\begin{equation}\label{susy-1}
\{Q_i^{{j}^\prime}, Q_k^{{l}^\prime}\}_{{}}=2i\,\epsilon^{{j}^\prime{l}^\prime}\epsilon_{ik}\,H_1 \,,
\end{equation}
where
\begin{equation}\label{H-1}
H_1=-{\textstyle\frac{1}{2G_{1}}}\,p^{i\alpha }p_{\alpha i}-
{\textstyle\frac{i}{G_{1}}}\,p^{k\alpha} R^{\beta}_{\alpha} (\partial_{\beta k}G_{1})
-{\textstyle\frac{1}{6}}\,\left[\Delta_xG_{1}- {\textstyle\frac{3}{2G_{1}}}\,
(\partial^{k\gamma}G_{1})(\partial_{\gamma k}G_{1})\right]R^{\alpha\beta}R_{\alpha \beta }
\end{equation}
is the canonical Hamiltonian for the system (\ref{act1-c}).

Let us now rewrite all expressions in terms of the real four-vector quantities
\begin{equation}\label{4vect-real1}
x^{A}=(\overline{x^{A}})\,,\qquad p_{A}=(\overline{p^{\phantom A}_{A}})\,,\qquad \chi^{A}=(\overline{\chi^{A}})\,,\qquad
A=1,2,3,4\,,
\end{equation}
defined by
\begin{equation}\label{4vect-1}
x_{\alpha i}={\textstyle\frac{1}{\sqrt{2}}}\,x^{A}(\sigma_A)_{\alpha i}\,,
\qquad x^{i\alpha}=-{\textstyle\frac{1}{\sqrt{2}}}\,x^{A}(\sigma^\dagger_A)^{i\alpha}\,,
\end{equation}
\begin{equation}\label{4vect-1-1}
p_{\alpha i}=-{\textstyle\frac{1}{\sqrt{2}}}\,p_{A}(\sigma^A)_{\alpha i}\,,\qquad
p^{i\alpha}={\textstyle\frac{1}{\sqrt{2}}}\,p_{A}(\sigma^{A\dagger})^{i\alpha}\,,
\end{equation}
\begin{equation}\label{4vect-1-1a}
\partial_{\alpha i}=-{\textstyle\frac{1}{\sqrt{2}}}\,\partial_{A}(\sigma^A)_{\alpha i}\,,\qquad
\partial^{i\alpha}={\textstyle\frac{1}{\sqrt{2}}}\,\partial_{A}(\sigma^{A\dagger})^{i\alpha}\,,
\end{equation}
\begin{equation}\label{4vect-1a}
\chi_{\alpha {i}^\prime}=\chi^{A}(\sigma_A)_{\alpha{i}^\prime}\,,\qquad
\chi^{{i}^\prime\alpha}=-\chi^{A}(\sigma^\dagger_A)^{{i}^\prime\alpha }\,.
\end{equation}
Here $\sigma^A=\sigma_A$ and $\sigma^{A\dagger}=\sigma^\dagger_A$ are defined as in \p{4sigma}.
We use the multiplier ${\textstyle\frac{1}{\sqrt{2}}}$ in (\ref{4vect-1})-(\ref{4vect-1-1a}) to ensure $x^{i\alpha}x_{\alpha i}=-x^{A}x^{A}$,
$\triangle_x=\partial^{i\alpha}\partial_{\alpha i}=-\partial_{A}\partial_{A}$, etc.
The opposite signs in (\ref{4vect-1}) and in (\ref{4vect-1-1}), (\ref{4vect-1-1a}) are chosen in order to preserve the equalities
$\{x^{A}, p_{B}\}=\delta^{A}_{B}$, $\partial_A x^B=\delta_{A}^{B}$.

In the new notation, the supercharges (\ref{susy-char1}) take the form
\begin{equation}\label{susy-char1-4}
Q_k^{{j}^\prime}= {\textstyle\frac{1}{\sqrt{2}}}\left[\chi^{A}p_{B}(\sigma^\dagger_{A}\sigma^{B})^{{j}^\prime}{}_k -
{\textstyle\frac{i}{12}}\,
\chi^{A}\chi^{B}\chi^{C}(\partial_{D}G_1)\,(\sigma^\dagger_{A}\sigma_{B}\sigma^\dagger_{C}\sigma^{D})^{{j}^\prime}{}_k \right].
\end{equation}
Splitting this expression into the singlet and the triplet parts,
\begin{equation}\label{susy-char1-s-tr}
Q={\textstyle\frac{1}{\sqrt{2}}}\,Q_k^{{j}^\prime} \delta_{{j}^\prime}^k\,,\qquad {Q}^a
={\textstyle\frac{i}{\sqrt{2}}}\,Q_k^{{j}^\prime}({\sigma}^a)^k{}_{{j}^\prime} \,,
\end{equation}
where $a =1, 2, 3$, we obtain
\begin{eqnarray}\label{susy-char1-s-1}
Q&=& \chi^{A}\,
\Big[p_{A} + {\textstyle\frac{i}{12}}\,\epsilon_{ABCD}(\partial_{D}G_1)\,\chi^{B}\chi^{C}\Big] \,,
\\
\label{susy-char1-tr-1}
Q^{{a}}&=& -\chi^E\,\eta^{{a}}_{EA}
\Big[p_{A} + {\textstyle\frac{i}{4}}\,\epsilon_{ABCD}(\partial_{D}G_1)\,\chi^{B}\chi^{C}\Big] \,,
\end{eqnarray}
where $\eta^{{a}}_{AB}$ are the 't Hooft symbols
(we remind their  definition and some properties in  Appendix~A).

The superfield $X^{i\alpha}$ can also be cast in the vector notations,
$X^{i\alpha} = -{\textstyle\frac{1}{\sqrt{2}}}\,X^{A}(\sigma^\dagger_A)^{i\alpha}$.
 Then
\be
\lb{XA}
X^A \ =\ x^A - \frac i2 \eta^a_{BC} \theta^B \theta^C \eta^a_{AD} \dot x^D +
\frac 1{24} \epsilon_{BCDE} \theta^B \theta^C \theta^D \theta^E \, \ddot x^A \ + \ {\rm fermion\ terms} \, ,
 \ee
where we used the definitions
 \be
\lb{thetavec}
 \theta^{i k^\prime}  \ =\ {\textstyle\frac{1}{\sqrt{2}}}\, (\sigma^\dagger_B)^{i k^\prime} \theta^B \, , \ \ \ \
 \theta_{ k^\prime i }  \ =\ - {\textstyle\frac{1}{\sqrt{2}}}\, (\sigma_B)_{k^\prime i} \theta^B\,.
 \ee
Note that the constraints \p{const1} imply in particular $\eta^a_{AB} D_A X^B = 0$ with
 \be
\lb{DA}
 D_A \ =\ \frac \partial {\partial \theta^A} - i \theta^A \frac {\partial} {\partial t} \, .
 \ee

In the four-vector notations, the action (\ref{act1-c}) takes the form
\begin{equation}\label{act1-c-g}
S_1 = \int dt \,\Big[{\textstyle\frac{1}{2}}\,g_{AB}\left(\dot x^{A}\dot x^{B}+i
\chi^{A}\hat\nabla\chi^{B}\right)-
{\textstyle\frac{1}{12}}\,\partial_A C_{BCD}\chi^{A}\chi^{B}\chi^{C}\chi^{D}\Big]\,,
\end{equation}
where
\begin{equation}\label{g1}
g_{AB}=G_1\delta_{AB}
\end{equation}
is the metric tensor, while
\begin{equation}\label{t1}
C_{ABC}=\epsilon_{ABCD}(\partial_DG_1)
\end{equation}
is the torsion. Covariant derivative of the fermionic fields
\begin{equation}\label{cd1}
\hat\nabla\chi^{A}=\dot\chi^{A}+\hat\Gamma^A_{BC}\dot x^B\chi^{C}
\end{equation}
involves the torsionful affine connection,
\begin{equation}\label{pc1}
\hat\Gamma_{A,BC}=g_{AD}\hat\Gamma^D_{BC}=\Gamma_{A,BC}+{\textstyle\frac{1}{2}}\,C_{ABC}\,,
\end{equation}
where
$\Gamma_{A,BC}={\textstyle\frac{1}{2}}\left(\partial_B g_{AC}+\partial_C g_{AB}-\partial_A g_{BC}\right)$
is the Levi-Civita connection.

The supercharges \p{susy-char1-s-1}, \p{susy-char1-tr-1} can be presented in a more geometric form, if introducing
the Lorentz spin connection
 \be
\lb{Omega}
\Omega_{A,BC}=e_B^{\underline{D}}\,e_C^{\underline{E}}\,\Omega_{A,\underline{D}\underline{E}} \,,
 \ \ \ \ \ \ \ \ \ \ \ \ \Omega_{A,\underline{B}\underline{C}}=e_{\underline{B}D}\left(\partial_A e_{\underline{C}}^{D} +
\Gamma^D_{AE}e_{\underline{C}}^{E} \right)
\ee
(tangent space indices are underlined).

In our case,
  the vierbein is $e^{\underline{B}}_{A}=\sqrt{G_1}\,\delta^{\underline{B}}_{A}$, and
\begin{equation}\label{Om-1}
\Omega_{A,BC}={\textstyle\frac12}\,\Big[\delta_{AB}(\partial_CG_1)-
\delta_{AC}(\partial_BG_1)\Big].
\end{equation}

We obtain
\begin{eqnarray}\label{susy-char1-s-2f}
Q&=& \chi^{A}
\Big(p_{A} -{\textstyle\frac{i}{2}}\,\Omega_{A,BC}\chi^{B}\chi^{C} +{\textstyle\frac{i}{12}}\,C_{ABC}\,\chi^{B}\chi^{C}\Big),
\\[6pt]
\label{susy-char1-tr-2f}
Q^{{a}}&=& \chi^D(I^{{a}})_{D}{}^{A}\Big(
p_{A} -{\textstyle\frac{i}{2}}\,\Omega_{A,BC}\chi^{B}\chi^{C} - {\textstyle\frac{i}{\,4\,}}\,C_{ABC}\,
\chi^{B}\chi^{C}\Big),
\end{eqnarray}
where the complex structure tensor was introduced,
\begin{equation}\label{compl-str-1}
(I^{{a}})_{A}{}^{B}=-\eta^{{a}}_{{A}B}\,.
\end{equation}

The representation  \p{susy-char1-s-2f}, \p{susy-char1-tr-2f} is valid for the supercharges in {\it any} HKT model \cite{QHKT}.
In the considered conformally flat 4-dimensional case, the second term  in (\ref{susy-char1-s-2f}) actually vanishes,
$-{\textstyle\frac{i}{2}}\,\Omega_{A,BC}\chi^{A}\chi^{B}\chi^{C}=0$.  Also note that, dividing the parameters
of supersymmetry transformations into the singlet and triplet parts,
\begin{equation}\label{eps-tr1a}
\varepsilon^{i}_{{k}^\prime}= -{\textstyle\frac{i}{\sqrt{2}}}\,\varepsilon\,\delta^{i}_{{k}^\prime} -
{\textstyle\frac{1}{\sqrt{2}}}\,\varepsilon_a(\sigma^a)^{i}{}_{{k}^\prime} \, ,
\end{equation}
we can represent the transformations (\ref{susy-tr1}) in the following form
\begin{equation}\label{susy-tr1a}
\delta x^{A}= i\,\varepsilon \chi^{A}+i \,\varepsilon_a(I^{{a}})^{A}{}_{B}\chi^{B}\,,\qquad
\delta \chi^{A}= -\varepsilon \dot x^{A}+ \varepsilon_a(I^{{a}})^{A}{}_{B} \, \dot x^{B}\,.
\end{equation}

The complex structure tensors (\ref{compl-str-1}) can be checked to form the quaternionic algebra \p{quaternion},
and they are covariantly constant with respect to the  connection \p{pc1}
\begin{equation}\label{const-str-1}
\hat\nabla_A(I^{{a}})_{B}{}^{C}=\partial_A(I^{{a}})_{B}{}^{C}-\hat\Gamma_{AB}^D(I^{{a}})_{D}{}^{C}+\hat\Gamma_{AD}^C(I^{{a}})_{B}{}^{D}=0\,.
\end{equation}
Thus, this model reproduces the HKT geometry as it should  and \p{pc1} is nothing but the Bismut connection with the torsion \p{Bismut},
which can also be expressed in this case as
\be
 C_{ABC} \ =\ \left( I_{AB} I_{CD} +  I_{AD} I_{BC} +
I_{AC} I_{DB}  \right)  \frac {\partial_D G_1} {G_1^2} \, .
 \ee

In the considered case, the supercharges can also be rewritten in a somewhat different form, using the
following relation,
\begin{equation}\label{equ-OC-4}
\chi^D(I^{{a}})_{D}{}^{A}\, \Omega_{A,BC}\chi^{B}\chi^{C}=
- \chi^D(I^{{a}})_{D}{}^{A}\,C_{ABC} \, \chi^{B}\chi^{C}\,.
\end{equation}
It allows us to rewrite  (\ref{susy-char1-tr-2f}) in various ways, in particular,  as
\begin{equation}\label{susy-char1a-tr-2f}
Q^{{a}}= \chi^D(I^{{a}})_{D}{}^{A}\Big(
p_{A} -{\textstyle\frac{i}{6}}\,\Omega_{A,BC}\chi^{B}\chi^{C} + {\textstyle\frac{i}{\,12\,}}\,C_{ABC}\,
\chi^{B}\chi^{C}\Big).
\end{equation}
Later, we will see that just the representation \p{susy-char1a-tr-2f}  allows a direct generalization to the CKT case. In the latter
case, there is no any analog of the relation \p{equ-OC-4} and it is impossible to cast the supercharges into the form \p{susy-char1-tr-2f}.

\subsection{Mirror ({\bf 4, 4, 0}) multiplet}

A mirror ({\bf 4, 4, 0}) multiplet is described by the superfield $Y^{ {i}^\prime \alpha^\prime}(t,\theta)$
subject to the constraints
\begin{equation}\label{const2}
D^{k({i}^\prime}Y^{{j}^\prime) \alpha^\prime } = 0\,,
\end{equation}
where the index $\alpha^\prime=1,2$ refers to an extra Pauli-G\"{u}rsey ${\rm SU}^\prime(2)$ group.
The constraints (\ref{const2}) are solved by
\begin{equation}
\label{Y}
Y^{{i}^\prime \alpha^\prime  } = y^{{i}^\prime \alpha^\prime }  -   \psi_{k}^{\alpha^\prime} \theta^{k{i}^\prime} -
i \dot y^{j^\prime \alpha^\prime  }
\theta_{{j}^\prime k } \theta^{k{i}^\prime} + {\textstyle\frac{i}{3}}\,\dot\psi_{j}^{\alpha^\prime} \theta^{j{k}^\prime}\theta_{{k}^\prime k}\theta^{k{i}^\prime}
+ {\textstyle\frac{1}{12}}\,\ddot y^{ {i}^\prime \alpha^\prime } \,
\theta^{k{k}^\prime} \theta_{{k}^\prime j} \theta^{j{j}^\prime} \theta_{{j}^\prime k}\,,
\end{equation}
i.e. they yield the same field contents, though with another assignment with respect to $SU(2)_L$ and $SU(2)_R$ automorphism groups.

The superfield action
\begin{equation}\label{act2}
S_2 = \int dt d^4 \theta\, {\cal L}_2(Y)
\end{equation}
results in the following component action
\begin{equation}\label{act2-c}
S_2 = \int dt \,\Big[ -{\textstyle\frac{1}{2}}\,G_2\dot y^{{i}^\prime \alpha^\prime }\dot y_{ \alpha^\prime {i}^\prime }+
{\textstyle\frac{i}{4}}\,G_2\dot\psi^{i \alpha^\prime  }\psi_{\alpha^\prime i} +
i\dot y^{{i}^\prime}_{\alpha^\prime}\,R^{\alpha^\prime\beta^\prime} (\partial_{\beta^\prime {i}^\prime}G_2)  +
{\textstyle\frac{1}{6}}\,(\triangle_yG_2)\,R^{\alpha^\prime\beta^\prime}R_{\alpha^\prime \beta^\prime }\Big].
\end{equation}
Here
\begin{equation}\label{G2}
G_2=-\triangle_y{\cal L}_2(y)\,, \quad R^{\alpha^\prime\beta^\prime}= {\textstyle\frac{1}{4}}\,\psi^{\alpha^\prime}_k
\psi^{k \beta^\prime}\,,
\end{equation}
and $\partial_{\alpha^\prime {i}^\prime}=\partial/\partial y^{{i}^\prime \alpha^\prime }\,$,
$\triangle_y=\partial^2/\partial y^{{i}^\prime \alpha^\prime }\partial y_{\alpha^\prime {i}^\prime}\,$.

The N\"other charges associated with the supersymmetry transformations
\begin{equation}\label{susy-tr2}
\delta y^{{i}^\prime \alpha^\prime }=  \psi_{k}^{\alpha^\prime}  \varepsilon^{k{i}^\prime} \,,\qquad
\delta\psi^{i \alpha^\prime }=2i\varepsilon^{i{k}^\prime} \dot y_{{k}^\prime}^{\alpha^\prime}
\end{equation}
read
\begin{equation}\label{susy-char2}
Q_i^{{j}^\prime}=
 p^{{j}^\prime \alpha^\prime }\psi_{ \alpha^\prime i}  -
{\textstyle\frac{i}{12}}\,\, (\partial^{ {j}^\prime \alpha^\prime}G_2) \,
\psi_{\alpha^\prime k} \psi^{k \beta^\prime } \psi_{\beta^\prime i } \, ,
\end{equation}
where $p_{\alpha^\prime {i}^\prime}=-G_2\dot y_{\alpha^\prime {i}^\prime}-
i\,R_{\alpha^\prime}^{\beta^\prime} (\partial_{\beta^\prime {i}^\prime}G_2)$ are the canonical momenta of $y^{ {i}^\prime \alpha^\prime}$.

Non-vanishing Poisson brackets are
\begin{equation}\label{DB-2}
\!\{y^{{i}^\prime \alpha^\prime }, p_{\beta^\prime {j}^\prime}\}_{{}}=\delta^{{i}^\prime}_{{j}^\prime}\delta^{\alpha^\prime}_{\beta^\prime} ,\quad
\{\psi^{i \alpha^\prime }, \psi_{ \beta^\prime j}\}_{{}}={\textstyle\frac{2i}{G_2}}\,\delta^{i}_{j}\delta^{\alpha^\prime}_{\beta^\prime} ,\quad
\{\psi^{i \alpha^\prime }, p_{\beta^\prime {j}^\prime }\}_{{}}=-{\textstyle\frac{1}{2G_2}}\,
(\partial_{\beta^\prime {j}^\prime}G_2)\,
\psi^{i \alpha^\prime }  .
\end{equation}
Supercharges (\ref{susy-char2}) form the ${\cal N}=4$ supersymmetry algebra
\begin{equation}\label{susy-2}
\{Q_i^{{j}^\prime}, Q_k^{{l}^\prime}\}_{{}}=2i\,\epsilon^{{j}^\prime{l}^\prime}\epsilon_{ik}\,H_2 \,,
\end{equation}
where
\begin{equation}\label{H-2}
H_2=-{\textstyle\frac{1}{2G_2}}\,p^{{i}^\prime \alpha^\prime }p_{\alpha^\prime {i}^\prime}-
{\textstyle\frac{i}{G_2}}\,  R^{\beta^\prime}_{\alpha^\prime} (\partial_{\beta^\prime {k}^\prime }{G_2})
p^{{k}^\prime \alpha^\prime }
-{\textstyle\frac{1}{6}}\,\left[\Delta_yG_2 - {\textstyle\frac{3}{2G_2}}\,
(\partial^{{k}^\prime \gamma^\prime }G_2)(\partial_{\gamma^\prime {k}^\prime}G_2)\right]R^{\alpha^\prime\beta^\prime}
R_{\alpha^\prime \beta^\prime }
\end{equation}
is the canonical Hamiltonian for the system (\ref{act2-c}).

Note that two $SU(2)$ symmetries (realized on the primed indices) are in general broken in the action \p{act2-c}, while one more $SU(2)$ realized
on the unprimed indices $i$ and acting only on fermions is unbroken. The interplay between these three $SU(2)$ symmetries is quite similar
to the one we had in the case of the action \p{act1-c}, up to the evident interchange between two automorphism $SU(2)$ symmetries of ${\cal N}=4$
superalgebra \p{susy-1}, \p{susy-2}.

Now we again pass to the real four-vector quantities
\begin{equation}\label{4vect-real2}
y^{M}=(\overline{y^{M}})\,,\qquad p_{M}=(\overline{p^{\phantom A}_{M}})\,,\qquad \psi^{M}=(\overline{\psi^{M}})\,,\qquad
M=1,2,3,4
\end{equation}
by the relations
\begin{equation}\label{4vect-2}
y_{\alpha^\prime {i}^\prime}={\textstyle\frac{1}{\sqrt{2}}}\,y^{M}(\sigma_M)_{\alpha^\prime{i}^\prime}\,,\quad
y^{{i}^\prime\alpha^\prime}=-{\textstyle\frac{1}{\sqrt{2}}}\,y^{M}(\sigma^\dagger_M)^{{i}^\prime\alpha^\prime }\,,
\end{equation}
\begin{equation}\label{4vect-2-1}
p_{\alpha^\prime {i}^\prime}=-{\textstyle\frac{1}{\sqrt{2}}}\,p_{M}(\sigma^M)_{\alpha^\prime{i}^\prime}\,,
\quad p^{{i}^\prime\alpha^\prime}={\textstyle\frac{1}{\sqrt{2}}}\,p_{M}(\sigma^{M\dagger})^{{i}^\prime\alpha^\prime }\,,
\end{equation}
\begin{equation}\label{4vect-2-1a}
\partial_{\alpha^\prime {i}^\prime}=-{\textstyle\frac{1}{\sqrt{2}}}\,\partial_{M}(\sigma^M)_{\alpha^\prime{i}^\prime}\,,
\quad \partial^{{i}^\prime\alpha^\prime}={\textstyle\frac{1}{\sqrt{2}}}\,\partial_{M}(\sigma^{M\dagger})^{{i}^\prime\alpha^\prime }\,,
\end{equation}
\begin{equation}\label{4vect-2a}
\psi_{\alpha^\prime i}=\psi^{M}(\sigma_{M})_{\alpha^\prime {i}}\,,\quad
\psi^{{i}\alpha^\prime}=-\psi^{{M}}(\sigma^\dagger_{M})^{{i}\alpha^\prime }\,.
\end{equation}

Here $\sigma^M \equiv \sigma_M$ and $\sigma^{M\dagger} \equiv \sigma^\dagger_M$.
One can also introduce the real  superfield $Y^M$, but, within the chosen conventions, the analog of Eq.\p{XA} does not look so
nice:
the flow of indices in \p{Y} is not so smooth.
One can choose other conventions, such that all the formulas for the mirror multiplet would be quite
 parallel to those for the ordinary one (see Appendix B). However,  these conventions are less convenient
for us because they  display the OKT structure (to be discussed in Sect. 3,4) in a more complicated manner.

The supercharges take the form
\begin{equation}\label{susy-char2-4}
Q_k^{{j}^\prime}= {\textstyle\frac{1}{\sqrt{2}}}\left[p_{M}\psi^{N}(\sigma^{M \dagger}\sigma_{N})^{{j}^\prime}{}_k +
{\textstyle\frac{i}{12}}\,(\partial_{M}G_2)\,
\psi^{N}\psi^{K}\psi^{L}(\sigma^{M\dagger }\sigma_{N}\sigma^\dagger_{K}\sigma_{L})^{{j}^\prime}{}_k \right].
\end{equation}
Dividing them into singlet and triplet parts
\begin{equation}\label{susy-char2-s-tr}
Q={\textstyle\frac{1}{\sqrt{2}}}\,Q_k^{{j}^\prime} \delta_{{j}^\prime}^k\,,\qquad
{Q}^a ={\textstyle\frac{i}{\sqrt{2}}}\,Q_k^{{j}^\prime}({\sigma}^a)^k{}_{{j}^\prime} \,,
\end{equation}
we obtain
\begin{eqnarray}\label{susy-char1-s-2}
Q&=& \psi^{M}\left(p_{M} +
{\textstyle\frac{i}{12}}\,\epsilon_{MNKL}(\partial_{L}{G_2})\,
\psi^{N}\psi^{K}\right),
\\[6pt]
\label{susy-char1-tr-2}
Q^{{a}}&=& \psi^{P}\eta^{{a}}_{PM}\left(p_{M} +
{\textstyle\frac{i}{4}}\,\epsilon_{MNKL}(\partial_{L}{G_2})\,
\psi^{N}\psi^{K}\right).
\end{eqnarray}
These expressions  have the same form as \p{susy-char1-s-tr}, up to the irrelevant sign of $Q^a$.

In the four-vector notations, the action (\ref{act2-c}) is rewritten as
\begin{equation}\label{act2-c-g}
S_1 = \int dt \,\Big[{\textstyle\frac{1}{2}}\,g_{MN}\left(\dot y^{M}\dot y^{N}+i
\psi^{M}\hat\nabla\psi^{N}\right)-
{\textstyle\frac{1}{12}}\,\partial_M C_{NKL}\psi^{M}\psi^{N}\psi^{K}\psi^{L}\Big],
\end{equation}
where the metric tensor and the torsion are defined as
\begin{equation}\label{g2-c}
g_{MN}={G_2}\,\delta_{MN}\,,
\qquad
C_{MNK}=\epsilon_{MNKL}(\partial_L{G_2})\,.
\end{equation}
The covariant derivative of the fermionic field is
\begin{equation}\label{G-c-g}
\hat\nabla\psi^{M}=\dot\psi^{M}+\hat\Gamma^M_{NK}\dot y^N\psi^{K}\,,\qquad
\hat\Gamma_{M,NK}=g_{ML}\hat\Gamma^L_{NK}=\Gamma_{M,NK}+{\textstyle\frac{1}{2}}\,C_{MNK}\,,
\end{equation}
where
$\Gamma_{M,NK}={\textstyle\frac{1}{2}}\left(\partial_N g_{MK}+\partial_K g_{MN}-\partial_M g_{NK}\right)$ are the standard Christoffel symbols.

The supercharges (\ref{susy-char1-s-2}), (\ref{susy-char1-tr-2})
can be rewritten in the canonical HKT form
\begin{eqnarray}\label{susy-char1-s-2a}
Q&=& \psi^{M}
\Big[\left(p_{M} -{\textstyle\frac{i}{2}}\,\Omega_{M,\underline{N}\underline{K}}\psi^{\underline{N}}\psi^{\underline{K}}\right)
+{\textstyle\frac{i}{12}}\,C_{MNK}\,\psi^{N}\psi^{K}\Big],
\\
\label{susy-char1-tr-2a}
Q^{{a}}&=& \psi^{L}(I^{{a}})_{L}{}^{M}\Big[
\left(p_{M} -{\textstyle\frac{i}{2}}\,\Omega_{M,\underline{N}\underline{K}}\psi^{\underline{N}}\psi^{\underline{K}}\right) -
{\textstyle\frac{i}{4}}\,C_{MNK}\,
\psi^{N}\psi^{K}\Big],
\end{eqnarray}
where $(I^{{a}})_{M}{}^{N}$ is the complex structure tensor,
\begin{equation}\label{compl-str-2}
(I^{{a}})_{M}{}^{N}=\eta^{{a}}_{MN}\,,
\end{equation}
$\Omega_{M,\underline{N}\underline{K}}=e_{\underline{N}L}\left(\partial_M e_{\underline{K}}^{L} +
\Gamma^D_{ML}e_{\underline{K}}^{L} \right)$ is the spin connection, and $\psi^{\underline{M}}=e^{\underline{M}}_{N}\psi^{N}\,$,
with $e^{\underline{M}}_{N}$ being the vierbeins. In our case $e^{\underline{M}}_{N}=\sqrt{G_2}\,\delta^{\underline{M}}_{N}$,
and the second terms  in (\ref{susy-char1-s-2a}) are identically zero, $\Omega_{M,NK}\psi^{M}\psi^{N}\psi^{K}=0\,$.

In other words, we have finally obtained {\it the same} model as for a single ordinary multiplet, up to an irrelevant sign in \p{compl-str-2}.
This sign, however, will play the important role in Sect. 4,  when treating the interaction between
the ordinary and mirror multiplets. That is the way how a nontrivial CKT geometry comes out.

\setcounter{equation}0
\section{Free system of two ({\bf 4, 4, 0}) multiplets}

Before studying nontrivial interacting systems,
we consider the free model living in 8-dimensional flat space to illustrate and make more precise
the picture \p{sch}  discussed in the Introduction. It will help us to understand what happens in the interacting case.

 The component action is
\begin{equation}\label{act-free-vec}
S_{free}= {\textstyle\frac{1}{2}}\,\int dt \,\Big[\dot x^A \dot x^A +
\dot y^M \dot y^M
+ i \chi^A \dot\chi^A +
i \psi^M \dot \psi^M \Big].
\end{equation}

It can be represented as a sum of two Lagrangians describing the flat ordinary and the flat mirror multiplets
\p{act1-c-g}, \p{act2-c-g} with $G_1 = G_2 = 1$.
In spinor notations,
\begin{equation}\label{act-free}
S_{free}= -{\textstyle\frac{1}{2}}\,\int dt \,\Big[\dot x^{i\alpha}\dot x_{\alpha i}+
\dot y^{{i}^\prime \alpha^\prime }\dot y_{\alpha^\prime {i}^\prime}
+{\textstyle\frac{i}{2}}\,\chi^{{i}^\prime\alpha}\dot\chi_{\alpha {i}^\prime}+
{\textstyle\frac{i}{2}}\,\psi^{i\alpha^\prime}\dot\psi_{\alpha^\prime i}\Big].
\end{equation}

The N\"other supercharges associated with the ${\cal N}=4$ supersymmetry transformations (\ref{susy-tr1}), (\ref{susy-tr2})
are expressed as
 \begin{equation}\label{susy-char-free}
Q_i^{{j}^\prime} = \chi^{j^\prime \beta }p^{(x)}_{\beta i} + p^{(y) {j}^\prime \beta^\prime }\psi_{\beta^\prime i} \,.
 \end{equation}
After passing to the four-vector notation according to (\ref{4vect-1})-(\ref{4vect-1a}), (\ref{4vect-2})-(\ref{4vect-2a}),
we obtain the equivalent set of supercharges
\begin{eqnarray}\label{susy-char-free-s}
Q&=& \chi^{A}p^{(x)}_{A} +\psi^{M} p^{(y)}_{M}  \,,
\\
\label{susy-char-free-tr}
Q^{{a}}&=& -\chi^{A}\eta^{{a}}_{{A}B} p^{(x)}_{B}  +\psi^{M}\eta^{{a}}_{{M}N} p^{(y)}_{N}  \,,
\end{eqnarray}
where  $Q = {\textstyle\frac{1}{\sqrt{2}}}Q_i^{{j}^\prime} \delta_{{j}^\prime}^i\,$ and
$Q^a :={\textstyle\frac{i}{\sqrt{2}}}\,Q_k^{{j}^\prime}({\sigma}^a)^k{}_{{j}^\prime}\,$.
The opposite signs in \p{susy-char-free-tr} imply the following block-diagonal form of the associated triplet of complex structures,
   \begin{equation}
\label{compl-str-free}
I^{{a}}=
\left(
\begin{array}{cc}
- \eta^{{a}}_{AB} & 0 \\
0 &  \eta^{{a}}_{MN} \\
\end{array}
\right).
\end{equation}

Now we observe that {\it the same} component action \p{act-free-vec} can  be obtained by considering
the system of two multiplets of the same sort, i.e. two free ordinary or two free mirror multiplets. The corresponding singlet N\"other supercharge
coincides with $Q$ defined in \p{susy-char-free-s}, while the triplet one is given by the expression
\begin{equation}\label{susy-char-free-tr-add}
S^{{a}}= -\chi^{A}\eta^{{a}}_{{A}B} p^{(x)}_{B}  -\psi^{M}\eta^{{a}}_{{M}N} p^{(y)}_{N}   \,,
\end{equation}
which differs from \p{susy-char-free-tr} by the relative sign of its two terms.

The supercharges  (\ref{susy-char-free-tr-add})
and (\ref{susy-char-free-s}) form the ${\cal N}=4$ superalgebra associated with the HKT complex structures
\begin{equation}\label{HKT-str-free}
J^{{a}}=
\left(
\begin{array}{cc}
- \eta^{{a}}_{{A}B} & 0 \\
0 & - \eta^{{a}}_{{M}N} \\
\end{array}
\right).
 \end{equation}
The complex structures \p{HKT-str-free} satisfy the quaternion algebra \p{quaternion}, as opposed to the structures \p{compl-str-free} which
constitute solely the Clifford algebra.

In the  considered free case, there is also a hidden ${\cal N}=4$ supersymmetry realized by the transformations
\begin{equation}\label{h-susy-tr1}
\delta x^{i\alpha}=   \eta^{\alpha {\beta}^\prime } \psi_{\beta^\prime}^{i}  \,,\qquad
\delta\chi^{{i}^\prime\alpha}=-2i  \eta^{\alpha {\beta}^\prime } \dot y_{{\beta}^\prime}^{{i}^\prime} \,,
\end{equation}
\begin{equation}\label{h-susy-tr2}
\delta y^{{i}^\prime \alpha^\prime} = -\chi_\beta^{i^\prime} \eta^{ \beta  {\alpha}^\prime}  \,,\qquad
\delta\psi^{i\alpha^\prime}=-2i \dot x_{{\beta}}^{i} \eta^{\beta {\alpha}^\prime }\,.
\end{equation}

The corresponding N\"other charges are
\begin{equation}\label{h-susy-char-free}
\tilde Q_\beta^{{\alpha}^\prime} = -\chi_{\beta k^\prime} p^{(y)}{}^{{k}^\prime \alpha^\prime }
+p^{(x)}_{  \beta k} \psi^{k{\alpha}^\prime} \,.
\nonumber
\end{equation}
In the four-vector notation this supercharge, as in the previous cases, can  equivalently be represented as the set of
singlet and triplet
supercharges,
\begin{equation}\label{h-susy-char-free-s-tr}
\tilde Q={\textstyle\frac{1}{\sqrt{2}}}\,Q_\beta^{{\alpha}^\prime} \delta^\beta_{{\alpha}^\prime}\,,\qquad \tilde
{Q}^p ={\textstyle\frac{i}{\sqrt{2}}}\,Q_\beta^{{\alpha}^\prime}({\sigma}^p)_{{\alpha}^\prime}{}^\beta \,, \; p=1, 2, 3\,,
\end{equation}
which have the following explicit form
\begin{eqnarray}\label{h-susy-char-free-s}
\tilde Q&=& -\chi^{A}\delta_{A}{}^{N}p^{(y)}_{N} +\psi^{M}\delta_{M}{}^{B} p^{(x)}_{B}    \,,
\\
\label{h-susy-char-free-tr}
\tilde Q^{{p}}&=& \chi^{A}\bar\eta^{{p}}_{{A}N} p^{(y)}_{N}  +\psi^{M}\bar\eta^{{p                               }}_{{M}B} p^{(x)}_{B}    \,,
\end{eqnarray}
Thus, we obtain four additional complex structure matrices
\begin{equation}\label{ad-compl-str-free}
\tilde I=
\left(
\begin{array}{cc}
0 & -\delta_{A}{}^{N} \\
\delta_{M}{}^{B} & 0 \\
\end{array}
\right), \qquad
\tilde I^{{p}}=
\left(
\begin{array}{cc}
0 & \bar\eta^{{p}}_{{A}N} \\
\bar\eta^{{p}}_{{M}B} & 0 \\
\end{array}
\right).
\end{equation}

The antisymmetric matrices $I^{{a}}$, $\tilde I$ and $\tilde I^{{p}}$ form  the
seven-dimensional Clifford algebra (when checking this, the identity \p{ide-imp} is handy).

\setcounter{equation}0
\section{Interaction of two ({\bf 4, 4, 0}) multiplets}

Consider now the system with the two mutually interacting root multiplets.

\subsection{Ordinary  multiplets}

When the system includes  several interacting
 multiplets ({\bf 4, 4, 0}) of the same type, its geometry  is HKT. Let us see how this comes about in the simplest nontrivial
case of two interacting ordinary multiplets.

Thus, we consider two pseudoreal superfields  $X^{i\alpha}(t,\theta), Z^{i\alpha}(t,\theta)$,
both of them
being subject to the constraints (\ref{const1}). The $\theta$ expansion of $X^{i\alpha}(t,\theta)$  was given in (\ref{X}),
for $Z^{i\alpha}(t,\theta)$ we have the expansion of the same type:
\begin{equation}\label{Z}
Z^{i\alpha} = z^{i\alpha}-\theta^{i{k}^\prime}\varphi_{{k}^\prime}^{\ \alpha} +
i\theta^{i{k}^\prime}\theta_{{k}^\prime k }\dot z^{k\alpha} -
{\textstyle\frac{i}{3}}\,\theta^{i{i}^\prime} \theta_{{i}^\prime k} \theta^{k{k}^\prime}
\dot\varphi_{{k}^\prime}^{\ \alpha}
-{\textstyle\frac{1}{12}}\,\theta^{k{k}^\prime}  \theta_{{k}^\prime j}
\theta^{j{i}^\prime}  \theta_{{i}^\prime k} \, \ddot z^{\,i\alpha},
\end{equation}
with four real bosonic component  fields in
$z^{i\alpha}=-\epsilon^{ij}\epsilon^{\alpha\beta}(\overline{z^{{\phantom{\delta}}}_{\beta j}})$
and four real fermionic component fields in
$\varphi^{i^\prime\alpha}=-\epsilon^{i^\prime j^\prime}\epsilon^{\alpha\beta}(\overline{\varphi^{{\phantom{\delta}}}_{j^\prime\beta}})\,$.

A general sigma-model superfield action,
\begin{equation}\label{act-2id}
S = \int dt d^4 \theta\, {\cal L}(X,Z)\,,
\end{equation}
amounts to the following component action \cite{IL}
\begin{eqnarray}\label{act-c-2id}
S&=& \int dt \,\Big( L_{b}+L_{2f}+L_{4f}\Big) ,\\[3pt]
L_{b}&=&
-{\textstyle\frac{1}{2}}\left(\triangle_x{\cal L}\right) \dot x^{i\alpha}\dot x_{\alpha i}
-{\textstyle\frac{1}{2}}\left(\triangle_z{\cal L}\right) \dot z^{i\alpha}\dot z_{\alpha i}
+2\epsilon^{ij}\left(\partial^{(x)}_{\alpha {i}}\partial^{(z)}_{\beta {j}}{\cal L}\right)
\dot x^{k\alpha}\dot z_{k}^{\beta} \,, \label{act-bose-2id}\\[6pt]
L_{2f}&=&   -{\textstyle\frac{i}{4}}\left(\triangle_x{\cal L}\right) \chi^{{i}^\prime\alpha}\dot\chi_{\alpha {i}^\prime}
-{\textstyle\frac{i}{4}}\left(\triangle_z{\cal L}\right) \varphi^{{i}^\prime\alpha}\dot\varphi_{\alpha {i}^\prime}
+ {\textstyle\frac{i}{2}}\,\epsilon^{ij}\left(\partial^{(x)}_{\alpha {i}}\partial^{(z)}_{\beta {j}}{\cal L}\right)
\left(\varphi_{{k}^\prime}^{\beta} \dot\chi^{{k}^\prime\alpha} - \dot\varphi_{{k}^\prime}^{\beta} \chi^{{k}^\prime\alpha}
\right)
\label{act-2fermi-2id}\\
&& +\,i \left(\partial^{(x)}_{\alpha k}{\triangle_x\cal L}\right)\dot x_{\beta}^k R_{(\chi)}^{\alpha\beta} +
2i\,\epsilon^{ij}\left(\partial^{(x)}_{ \alpha  i}\partial^{(z)}_{\beta j}\partial^{(z)}_{\gamma k}{\cal L}\right)
\dot x^{k\alpha}R_{(\varphi)}^{\beta\gamma} \nonumber \\
&& +\,{\textstyle\frac{i}{4}}\left(\partial^{(z)}_{\alpha k}{\triangle_x\cal L}\right)\dot x_{\beta}^k
\varphi_{j^\prime}^{\alpha}\chi^{j^\prime\beta}
- {\textstyle\frac{i}{2}}\,\epsilon^{ij}\left(\partial^{(z)}_{\alpha i}\partial^{(x)}_{\beta j}\partial^{(x)}_{\gamma k}{\cal L}\right)
\dot x^{k\beta} \varphi_{j^\prime}^{\alpha}\chi^{j^\prime\gamma} \nonumber \\
&& +\,i \left(\partial^{(z)}_{\alpha k}{\triangle_z\cal L}\right)\dot z_{\beta}^k R_{(\varphi)}^{\alpha\beta} +
2i\,\epsilon^{ij}\left(\partial^{(z)}_{\alpha i}\partial^{(x)}_{\beta j}\partial^{(x)}_{\gamma k}{\cal L}\right)
\dot z^{k\alpha}R_{(\chi)}^{\beta\gamma} \nonumber \\
&& +\,{\textstyle\frac{i}{4}}\left(\partial^{(x)}_{\alpha k}{\triangle_z\cal L}\right)\dot z_{\beta}^k \chi_{j^\prime}^{\alpha}\varphi^{j^\prime\beta}
- {\textstyle\frac{i}{2}}\,\epsilon^{ij}\left(\partial^{(x)}_{\alpha i}\partial^{(z)}_{\beta j}\partial^{(z)}_{\gamma k}{\cal L}\right)
\dot z^{k\beta} \chi_{j^\prime}^{\alpha}\varphi^{j^\prime\gamma}\,, \nonumber \\[6pt]
L_{4f} &=&
{\textstyle\frac{1}{6}}\,(\triangle^2_x{\cal L})\, R_{(\chi)}^{\alpha\beta}R_{(\chi)}{}_{\alpha \beta }+
{\textstyle\frac{1}{6}}\,(\triangle^2_z{\cal L})\, R_{(\varphi)}^{\alpha\beta}R_{(\varphi)}{}_{\alpha \beta }
\label{act-4fermi-2id}\\
&&
-\,{\textstyle\frac{1}{3}}\,
\epsilon^{ij}\left(\partial^{(x)}_{\alpha i}\partial^{(z)}_{\beta j}\triangle_x{\cal L}\right) R_{(\chi)}^{\alpha\gamma} \chi_{{k}^\prime\gamma} \varphi^{\beta k^\prime }
+ {\textstyle\frac{1}{3}}\,\epsilon^{ij}\left(\partial^{(x)}_{\alpha i}\partial^{(z)}_{\beta j}\triangle_z{\cal L}\right)
R_{(\varphi)}^{\beta\gamma} \varphi_{\gamma {k}^\prime} \chi^{\,k^\prime\alpha} \nonumber\\
&& -\,{\textstyle\frac{1}{2}}\,(\triangle_x\triangle_z{\cal L}) R_{(\chi)}^{i^\prime j^\prime} R_{(\varphi)}{}_{ i^\prime j^\prime}
+2\epsilon^{ik}\epsilon^{jl} \left(\partial^{(x)}_{\alpha i}\partial^{(x)}_{\beta j}\partial^{(z)}_{\gamma k }\partial^{(z)}_{\delta l}{\cal L}\right)
R_{(\chi)}^{\alpha\beta}R_{(\varphi)}^{\gamma\delta}\,, \nonumber
\end{eqnarray}
where
\be
\lb{Riprime}
R_{(\chi)}^{i^\prime j^\prime} = {\textstyle\frac{1}{4}}\, \chi^{ i' \gamma} \chi^{j'}_\gamma, \ \ \ \ \ \ \
R_{(\varphi) \, i^\prime j^\prime} = {\textstyle\frac{1}{4}}\, \varphi_{i'}^\gamma \varphi_{\gamma j'} \, .
  \ee
The action (\ref{act-c-2id}) is invariant with respect to ${\cal N}=\,4$ supersymmetry transformations (\ref{susy-tr1}) of $x^{i\alpha}$, $\chi^{{i}^\prime\alpha}$
and similar transformations for the component fields of the superfield $Z^{i\alpha}$:
\begin{equation}\label{susy-tr-add-a}
\delta x^{i\alpha}= \xi^{i{k}^\prime} \chi_{{k}^\prime}^{\alpha} \,,\quad
\delta\chi^{{i}^\prime\alpha} = -2i \dot x_k^{\alpha}  \xi^{k{i}^\prime} \, ;\qquad
\delta z^{i\alpha}= \xi^{i{k}^\prime} \varphi_{{k}^\prime}^{\alpha} \,,\quad
\delta\varphi^{{i}^\prime\alpha}=-2i \dot z_k^{\alpha} \xi^{k{i}^\prime} \,\,.
\end{equation}
It is ${\cal N}=\,4$ HKT supersymmetry with complex structures $J^a$ defined in (\ref{HKT-str-free}).

Note that the bosonic part of the action generically includes  the mixed kinetic term $\propto \dot{x} \dot{z}$.
It vanishes only under
the condition
\begin{equation}\label{cond-2id}
\epsilon^{ij}\left(\partial^{(x)}_{\alpha i}\partial^{(z)}_{\beta j}{\cal L}\right)=0\,,
\end{equation}
which holds  if and only if the Lagrangian is a sum of two
terms depending, respectively, only on $x, \chi$ and only on $z,\varphi$.
In other words, two multiplets do not interact in this case\footnote{Clearly, \p{cond-2id} is necessary for that.
Let us prove that it is also sufficient. Note first that all other mixed terms in \p{act-2fermi-2id} and \p{act-4fermi-2id}
are reduced to \p{cond-2id} and its derivatives. There are also the terms involving $\triangle_x {\cal L} \equiv G_1$, \
$\triangle_z {\cal L} \equiv G_2$,
and their
derivatives. To understand their structure, let us act on  \p{cond-2id} with $\partial^{\alpha (x)}_k
= \epsilon^{\alpha \gamma} \partial^{ (x)}_{\gamma k}$. The operator $ \epsilon^{\alpha \gamma} \partial^{ (x)}_{\gamma k}
\partial^{ (x)}_{\alpha  i}$ is antisymmetric with respect to $k \leftrightarrow i$ and so is reduced to $(1/2) \epsilon_{ki} \triangle_x$.
We see that $\partial^{(z)}_{\beta j} G_1 = 0$, i.e. $G_1$ does not depend on $z$.
By the same token, $G_2$ does not depend on $x$. Bearing in mind the remarks above,
this proves our assertion.}.

It is also worth noting that the action \p{act-c-2id} respects the invariance under one of the automorphism $SU(2)$ symmetries
(the one acting on the primed indices $i'$, i.e. realized only on the fermionic filds), like the actions of single (${\bf 4, 4, 0}$)
multiplets.

\subsection{Two mutually mirror (4, 4, 0) multiplets}

We now consider the  superfield action
\begin{equation}\label{act}
S = \int dt d^4 \theta\, {\cal L}(X,Y)\,,
\end{equation}
where $X$ is an ordinary multiplet, while $Y$ is a mirror one.
This action gives rise to the following component form \cite{ILS}
\begin{eqnarray}\label{act-c-eq}
S&=& \int dt \, L=\int dt \,\Big( L_{b}+L_{2f}+L_{4f}\Big) ,\\[3pt]
L_{b} &=&
-{\textstyle\frac{1}{2}}\,G_1\, \dot x^{i\alpha}\dot x_{\alpha i}
-{\textstyle\frac{1}{2}}\,G_2\, \dot y^{ i^\prime \alpha^\prime} \dot y_{\alpha^\prime i^\prime}
\,, \label{act-bose-eq}\\[6pt]
L_{2f}&=&   -{\textstyle\frac{i}{4}}\,G_1\, \chi^{{i}^\prime\alpha}\dot\chi_{\alpha i^\prime}
-{\textstyle\frac{i}{4}}\,G_2\, \psi^{{i}\alpha^\prime} \dot\psi_{\alpha^\prime i}
\label{act-2fermi-eq}\\
&& +\,i \left(\partial^{(x)}_{\alpha k}\,G_1\right)\dot x_{\beta}^k R^{\alpha\beta} -
i \left(\partial^{(x)}_{\alpha i}\,G_2\right) R^{ik} \dot x^{\alpha}_k
- {\textstyle\frac{i}{2}}\left(\partial^{(y)}_{ \alpha^\prime i^\prime}\,G_1\right) \chi^{i^\prime\alpha}
\dot x_{\alpha i} \psi^{i\alpha^\prime} \nonumber \\
&& +\, i \left(\partial^{(y)}_{\alpha^\prime k^\prime }\,G_2\right)\dot y_{\beta^\prime}^{k^\prime}
 R^{\alpha^\prime\beta^\prime} -
i  \left(\partial^{(y)}_{ \alpha^\prime i^\prime }\,G_1\right)  R^{i^\prime k^\prime} \dot y^{\alpha^\prime}_{k^\prime}
- {\textstyle\frac{i}{2}}\left(\partial^{(x)}_{\alpha i}\,G_2\right)  \psi^{i\alpha^\prime}
\dot y_{\alpha^\prime i^\prime }\chi^{i^\prime\alpha}\,, \nonumber \\[6pt]
L_{4f} &=&
{\textstyle\frac{1}{6}}\left(\triangle_x\,G_1\right) R^{\alpha\beta}R_{\beta \alpha}+
{\textstyle\frac{1}{6}}\left(\triangle_y\,G_2\right) R^{\alpha^\prime\beta^\prime}R_{\beta^\prime \alpha^\prime}\label{act-4fermi-eq}\\
&& -\,{\textstyle\frac{1}{3}}\left(\partial^{(x)}_{\alpha i}\partial^{(y)}_{ \alpha^\prime i^\prime}\,G_1\right) R^{\alpha\beta}\chi^{{i}^\prime}_\beta\psi^{i\alpha^\prime}
- {\textstyle\frac{1}{3}}\left(\partial^{(x)}_{\alpha i}\partial^{(y)}_{\alpha^\prime {i}^\prime}\,G_2\right) R^{\alpha^\prime\beta^\prime}\psi^i_{\beta^\prime}\chi^{{i}^\prime\alpha} \nonumber\\
&&+\, \left(\partial^{(x)}_{\alpha i}\partial^{(x)}_{\beta j}\,G_2\right) R^{\alpha\beta}R^{ij}
+ \left(\partial^{(y)}_{\alpha^\prime i^\prime }\partial^{(y)}_{ \beta^\prime j^\prime}\,G_1\right)
R^{\alpha^\prime\beta^\prime}R^{{i}^\prime{j}^\prime} \,. \nonumber
\end{eqnarray}
Here
\begin{equation}\label{MN8}
G_1(x,y)=\triangle_x{\cal L}(x,y)\,, \quad G_2(x,y)=-\triangle_y{\cal L}(x,y)\,,
\end{equation}
$R^{\alpha \beta}$, $R^{\alpha' \beta'}$, $R^{i' j'}$ were defined above, and
\begin{equation}\label{Om34}
R^{ij}=  R^{ji} = {\textstyle\frac{1}{4}}\,\psi^{i \gamma^\prime } \psi^j_{\gamma^\prime}\,.
\end{equation}

Note that, in contrast to the Lagrangian \p{act-bose-2id} describing the interaction of two ordinary $({\bf 4,4,0})$  multiplets,
the mixed kinetic terms
$\propto \dot x \dot y$ are absent here. This can be best understood when using the conventions of Appendix B (see \p{nomixed1}, \p{nomixed2}).

The N\"other supercharges corresponding to the ${\cal N}{=}\,4$ supersymmetry transformations (\ref{susy-tr1}), (\ref{susy-tr2})
are
\begin{eqnarray}\label{susy-char}
Q_i^{{j}^\prime} &=& \chi^{{j}^\prime\alpha } p^{(x)}_{ \alpha i} + p^{(y)}{}^{ j^\prime \alpha^\prime}\psi_{\alpha^\prime i}\\[6pt]
&&
+\,{\textstyle\frac{i}{12}}  \chi^{{j}^\prime\beta } \chi_{\beta {k}^\prime }  \chi^{{k}^\prime\alpha }
\left(\partial^{(x)}_{\alpha i}\,G_1\right)
- {\textstyle\frac{i}{12}}\left(\partial^{(y) j^\prime \alpha^\prime }\,G_2\right)  \psi_{\alpha^\prime k} \psi^{k\beta^\prime}
\psi_{\beta^\prime i}
\nonumber\\[6pt]
&& + \,{\textstyle\frac{i}{4}}  \chi^{{j}^\prime\beta } \chi_{\beta {k}^\prime}
\left(\partial^{(y)}{}^{{k}^\prime \alpha^\prime }\,G_1\right) \psi_{\alpha^\prime i}
- {\textstyle\frac{i}{4}} \chi^{{j}^\prime\beta } \left(\partial^{(x)}_{{\beta} k}\,G_2\right)
\psi^{k\alpha^\prime} \psi_{\alpha^\prime i}
 \,.
\nonumber
\end{eqnarray}

In the case without the $X,Y$ interaction, i.e. for ${\cal L}(X,Y) = A(X) + B(Y)$, the action (\ref{act-c-eq}) coincides with the action
\p{act-bose-2id} in which the multiplets $X$ and $Y$ (substituted for $Z$) do not interact.  Such  action, similar to the free action \p{act-free},
is invariant with respect to two different
sets of supersymmetry transformations: the CKT transformations  (\ref{susy-tr1}), (\ref{susy-tr2})
 and the HKT transformations \p{susy-tr-add-a} (in which one should make replacements $z \rightarrow y,\, \varphi \rightarrow \chi$).
 But, generically, when the mutual interactions are switched on, the actions (\ref{act-c-eq}) and \p{act-bose-2id}
are different. The action (\ref{act-c-eq}) is invariant with respect to the CKT transformations, but not with respect to the HKT ones.
For the action \p{act-bose-2id}, the inverse is true.

An important new feature of the mixed action \p{act-c-eq} is that in general it breaks all four $SU(2)$ symmetries one can realize on the
component fields. The reason is that the metric functions $G_1$ and $G_2$ depend now on the bosonic fields carrying the doublet indices of
all these $SU(2)$ symmetries which so can be totally broken.

We have seen above that the free action \p{act-free} enjoys an additional invariance under the transformations
\p{h-susy-tr1}, (\ref{h-susy-tr2}) which  mix components from different multiplets.
The action \p{act-c-eq} also has this property under the condition
\begin{equation}\label{GG}
G_1(x,y)=G_2(x,y)\equiv G(x,y)\,,
\end{equation}
when the metric becomes conformally flat.
In this case, the ${\cal N} = 4$ CKT supersymmetry extends to the ${\cal N} = 8$ OKT supersymmetry.
The N\"other supercharges corresponding to the transformations \p{h-susy-tr1}, (\ref{h-susy-tr2}) are
\begin{eqnarray}
\label{h-susy-char}
\tilde Q_\beta^{{\alpha}^\prime} &=& -\chi_{\beta k^\prime} p^{(y)}{}^{{k}^\prime \alpha^\prime }
+p^{(x)}_{\beta k} \psi^{k \alpha^\prime}\\[6pt]
&&
- \,{\textstyle\frac{i}{12}} \left(\partial^{(x)}_{\beta i}\,G\right)
\psi^{i\gamma^\prime}\psi_{\gamma^\prime k}  \psi^{k\alpha^\prime}
- {\textstyle\frac{i}{12}}  \chi_{\beta i^\prime} \chi^{i^\prime\gamma} \chi_{\gamma k^\prime}
\left(\partial^{(y)}{}^{ k^\prime \alpha^\prime}\,G\right)
\nonumber\\[6pt]
&& +\,{\textstyle\frac{i}{4}} \chi_{\beta k^\prime} \left(\partial^{(y)}{}^{{k}^\prime \gamma^\prime }\,G\right)
\psi_{\gamma^\prime i} \psi^{i\alpha^\prime}
+{\textstyle\frac{i}{4}} \chi_{\beta i^\prime} \chi^{i^\prime\gamma} \left(\partial^{(x)}_{\gamma k}\,G\right)
\psi^{k\alpha^\prime} \,.
\nonumber
\end{eqnarray}

It is important to mention that the condition (\ref{GG})
together with the relations (\ref{MN8}) lead to the $D=8$ harmonicity of the Lagrangian,
\begin{equation}\label{har-L}
(\triangle_x+\triangle_y)\,{\cal L}(x,y)=0\,,
\end{equation}
as well as to the $D= 8$ harmonicity of the conformal factor
\begin{equation}\label{har-G}
(\triangle_x+\triangle_y)\,G(x,y)=0\,.
\end{equation}

\subsection{Eight-dimensional formulation}

It is instructive to rewrite the relations of the previous subsection  in the eight-dimensional vector notations.

Let us introduce 8-component quantities composed out of the 4-component ones:
\begin{equation}\label{8-comp-b}
x^{\mu}=\left(x^{A},y^{M} \right),\qquad
\psi^{\mu}=\left(\chi^{A},\psi^{M} \right),\qquad \mu=1,\ldots,8\,.
\end{equation}
Then, the 8-dimensional metric is
\begin{equation}
\label{g8}
g_{\mu\nu}=
\left(
\begin{array}{cc}
G_1\,\delta_{AB} & 0 \\
0 & G_2\,\delta_{MN} \\
\end{array}
\right),
\end{equation}
and the Lagrangian (\ref{act-c-eq}) takes the concise form
\begin{equation}
\label{act-c-g}
L={\textstyle\frac{1}{2}}\,g_{\mu\nu}\left(\dot x^{\mu}\dot x^{\nu}+i
\psi^{\mu}\hat\nabla\psi^{\nu}\right)-
{\textstyle\frac{1}{12}}\,\partial_\mu C_{\nu\lambda\rho}\psi^{\mu}\psi^{\nu}\psi^{\lambda}\psi^{\rho}\,,
\end{equation}
where
\begin{equation}\label{nabla8}
\hat\nabla\psi^{\mu}=\dot\psi^{\mu}+\hat\Gamma^\mu_{\nu\lambda}\dot x^\nu\psi^{\lambda}
\,,\qquad
\hat\Gamma_{\mu,\nu\lambda}=g_{\mu\rho}\hat\Gamma^\rho_{\nu\lambda}=\Gamma_{\mu,\nu\lambda}+{\textstyle\frac{1}{2}}\,C_{\mu\nu\lambda}\,.
\end{equation}
It implies the following
non-vanishing Poisson brackets for the (curved) phase space variables
\begin{equation}\label{DB-8}
\{x^{\mu}, p_{\nu}\}_{{}}=\delta^{\mu}_{\nu}\,,\qquad
\{\psi^{\mu}, \psi^{\nu}\}_{{}}=-ig^{\mu\nu}\,,\qquad
\{\psi^{\mu}, p_{\nu}\}_{{}}=-{\textstyle\frac{1}{2}}\,g^{\mu\lambda}(\partial_{\nu}g_{\lambda\rho})\,\psi^{\rho} \,.
\end{equation}

The Lagrangian \p{act-c-g} looks very similar to the HKT Lagrangian \p{act2-c-g}. However, there is an important difference between both.
While in the HKT case the connection \p{pc1} had a transparent geometric interpretation as the Bismut connection for the HKT complex structures,
the connection \p{nabla8} cannot be interpreted as a Bismut connection. It is something else. We will come back to this point later.

The Levi-Civita connection
$\Gamma_{\mu,\nu\lambda}={\textstyle\frac{1}{2}}\left(\partial_\nu g_{\mu\lambda}+\partial_\lambda
g_{\mu\nu}-\partial_\mu g_{\nu\lambda}\right)$
has the following non-vanishing components
\begin{equation}\label{Gamma1}
\Gamma_{A,BC}={\textstyle\frac12}\left[\delta_{AB}(\partial_C^{(x)}G_1)+
\delta_{AC}(\partial_B^{(x)}G_1)-\delta_{BC}(\partial_A^{(x)}G_1) \right],
\end{equation}
\begin{equation}\label{Gamma2}
\Gamma_{M,NK}={\textstyle\frac12}\left[\delta_{MN}(\partial_K^{(y)}G_2)+
\delta_{MK}(\partial_N^{(y)}G_2)-\delta_{NK}(\partial_M^{(y)}G_2) \right],
\end{equation}
\begin{equation}\label{Gamma3}
\Gamma_{M,NA}=-\Gamma_{A,MN}={\textstyle\frac12}\,\delta_{MN}(\partial_A^{(x)}G_2)\,,\qquad
\Gamma_{A,BM}=-\Gamma_{M,AB}={\textstyle\frac12}\,\delta_{AB}(\partial_M^{(y)}G_1)\,.
\end{equation}
The non-vanishing components of the torsion are
\begin{equation}\label{tors1}
C_{ABC}=\epsilon_{ABCD}(\partial_D^{(x)}G_1)\,,\qquad C_{MNK}=\epsilon_{MNKL}(\partial_L^{(y)}G_2)\,,
\end{equation}
\begin{equation}\label{tors2}
C_{MAB}=-C_{AMB}=C_{ABM}= - \eta^a_{AB} \eta^a_{MK} (\partial_K^{(y)}G_1)\,,
\end{equation}
\begin{equation}\label{tors3}
C_{AMN}=-C_{MAN}=C_{MNA}= - \eta^a_{AC} \eta^a_{MN} (\partial_C^{(x)}G_2)\,.
\end{equation}
Note that the torsion components (\ref{tors2}), (\ref{tors3}) obey the
$4D$ self-duality conditions
\begin{equation}\label{sd-tors}
\epsilon_{ABCD}C_{MCD}=2\,C_{MAB}\,,\qquad \epsilon_{MNKL}C_{AKL}=2\,C_{AMN}\,.
\end{equation}

The standard spin connection, $\Omega_{\mu,{\nu}{\lambda}}=e_\nu^{\underline{\rho}}\,e_\lambda^{\underline{\sigma}}\,
\Omega_{\mu,\underline{\rho}\underline{\sigma}}$,
$\Omega_{\mu,\underline{\nu}\underline{\lambda}}=e_{\underline{\nu}\rho}\left(\partial_\mu e_{\underline{\lambda}}^{\rho} +
\Gamma^\rho_{\mu\sigma}e_{\underline{\lambda}}^{\sigma} \right),$ for the metric \p{g8} is reduced  to
\begin{equation}\label{Om8}
\Omega_{\mu,{\nu}{\lambda}}=\Gamma_{\nu,{\mu}{\lambda}}-{\textstyle\frac12}\,\partial_\mu g_{\nu\lambda}= \partial_{[\lambda} g_{\nu]\mu}\,,
\end{equation}
that is a collection of the following components
\begin{equation}\label{Om1}
\Omega_{A,BC}={\textstyle\frac12}\left[\delta_{AB}(\partial_C^{(x)}G_1)-
\delta_{AC}(\partial_B^{(x)}G_1)\right],\quad
\Omega_{M,NK}={\textstyle\frac12}\left[\delta_{MN}(\partial_K^{(y)}G_2)-
\delta_{MK}(\partial_N^{(y)}G_2)\right],
\end{equation}
\begin{equation}\label{Om2}
\Omega_{M,NA}=-\Omega_{M,AN}={\textstyle\frac12}\,\delta_{MN}(\partial_A^{(x)}G_2)\,,\qquad
\Omega_{A,BM}=-\Omega_{A,MB}={\textstyle\frac12}\,\delta_{AB}(\partial_M^{(y)}G_1)\,,
\end{equation}
\begin{equation}\label{Om3}
\Omega_{M,AB}=0\,,\qquad
\Omega_{A,MN}=0\,.
\end{equation}
These expressions imply that
\be
\lb{Ompsi30}
\Omega_{\mu,\nu\lambda}\psi^{\mu}\psi^{\nu}\psi^{\lambda}=0\, .
 \ee

The canonical classical Hamiltonian of the system (\ref{act-c-g}) has the following form,
\begin{equation}\label{ham-c-g}
H={\textstyle\frac{1}{2}}\,g^{\mu\nu}{\cal P}_\mu{\cal P}_\nu+
{\textstyle\frac{1}{12}}\,\partial_\mu C_{\nu\lambda\rho}\psi^{\mu}\psi^{\nu}\psi^{\lambda}\psi^{\rho}\,,
\end{equation}
where
\begin{equation}\label{mom-c-g}
{\cal P}_\mu=p_\mu- {\textstyle\frac{i}{2}}\,\hat\Omega_{\mu,\nu\lambda}\psi^{\nu}\psi^{\lambda}\,,
\end{equation}
and $\hat\Omega_{\mu,\nu\lambda} $ is the torsionful spin connection,
\begin{equation}
\label{hat-om}
\hat\Omega_{\mu,\nu\lambda}=\Omega_{\mu,\nu\lambda}- {\textstyle\frac{1}{2}}\,C_{\mu\nu\lambda}\,.
\end{equation}
It corresponds (notwithstanding an ostensibly  opposite sign of the second term!) to the torsionful affine
 connection (\ref{nabla8}),
$$\hat\Omega_{\mu,{\nu}{\lambda}}=e_\nu^{\underline{\rho}}e_\lambda^{\underline{\sigma}}\,
\hat\Omega_{\mu,\underline{\rho}\underline{\sigma}}\,, \ \ \ \ \ \ \ \ \ \ \ \
\hat\Omega_{\mu,\underline{\nu}\underline{\lambda}}=e_{\underline{\nu}\rho}\left(\partial_\mu e_{\underline{\lambda}}^{\rho} +
\hat\Gamma^\rho_{\mu\sigma}e_{\underline{\lambda}}^{\sigma} \right).$$
Note that, in the considered case, the equality
$
\Omega_{\mu,\nu\lambda}\psi^{\nu}\psi^{\lambda}=\Gamma_{\nu,\mu\lambda}\psi^{\nu}\psi^{\lambda}
$
is satisfied.

The singlet and triplet parts of the supercharges (\ref{susy-char}) take the form
\begin{eqnarray}\label{susy-char1-s-2g}
Q&=& \psi^{\mu}
\Big(p_{\mu} -{\textstyle\frac{i}{2}}\,\Omega_{\mu,\nu\lambda}\psi^{\nu}\psi^{\lambda} +{\textstyle\frac{i}{12}}\,C_{\mu\nu\lambda}\,\psi^{\nu}\psi^{\lambda}\Big) ,
\\[6pt]
\label{susy-char1-tr-2g}
Q^{{a}}&=& \psi^{\sigma}(I^{{a}})_{\sigma}{}^{\mu}\Big(
p_{\mu} -{\textstyle\frac{i}{6}}\,\Omega_{\mu,\nu\lambda}\psi^{\nu}\psi^{\lambda}+{\textstyle\frac{i}{12}}\,C_{\mu\nu\lambda}\,\psi^{\nu}\psi^{\lambda}\Big),
\end{eqnarray}
with the  block--diagonal complex structures $(I^a)_\sigma^{\ \mu}$ given by the expressions \p{compl-str-free}.
  Bearing in mind \p{hat-om} and \p{Ompsi30}, the supercharges can be rewritten in the following compact form
\begin{equation}\label{susy-char1-c}
Q= \psi^{\mu}\Pi_{\mu}\,,\qquad
Q^{{a}}= \psi^{\nu}(I^{{a}})_{\nu}{}^{\mu}\Pi_{\mu}\,,
\end{equation}
where
\begin{equation}\label{mom-c-s}
\Pi_\mu=p_\mu- {\textstyle\frac{i}{6}}\,\hat\Omega_{\mu}\,,\qquad
\hat\Omega_{\mu}=\hat\Omega_{\mu,\nu\lambda}\psi^{\nu}\psi^{\lambda}
\end{equation}
(cf. eq. (\ref{mom-c-g})).
 Using (\ref{Clifford}), (\ref{w-cov-compl-2}), (\ref{IdC}),  and (\ref{DB-8}),
one can explicitly verify the validity of  the classical ${\cal N}=4$ supersymmetry algebra
\begin{equation}\label{DB-susy}
\{Q, Q\}_{{}}=-2i H \,,\qquad \{Q^a, Q^b\}_{{}}=-2i H \delta^{ab}\,,\qquad
\{Q, Q^a\}_{{}}=0 \,,
\end{equation}
with the Hamiltonian  (\ref{ham-c-g}).

As was discussed earlier, $I^a$ obey the Clifford algebra \p{Clifford},
but not the quaternion algebra \p{quaternion}.
The tensors $I^a_{\mu\nu}$ and  $I^{\mu\nu}_a$ are obtained from \p{compl-str-free} through multiplying by
$g_{\mu\nu} = {\rm diag}(G_1 \delta_{AB}, \, G_2 \delta_{MN})$ and $g^{\mu\nu} = {\rm diag}(G_1^{-1} \delta_{AB}, \,
G_2^{-1} \delta_{MN} )$. They are all antisymmetric.

Similarly to what we had for the HKT geometry, the transformations (\ref{susy-tr1}), (\ref{susy-tr2}) can be rewritten in the form
\begin{equation}\label{susy-tr-8not}
\delta x^{\mu}= i\,\varepsilon \psi^{\mu}+i \,\varepsilon_a(I^{{a}})^{\mu}{}_{\nu}\psi^{\nu}\,,\qquad
\delta \psi^{\mu}= -\varepsilon \dot x^{\mu}+ \varepsilon_a(I^{{a}})^{\mu}{}_{\nu} \, \dot x^{\nu}\,,
\end{equation}
where $(I^{{a}})^{\mu}{}_{\nu}=g^{\mu\lambda}(I^{{a}})_{\lambda}{}^{\rho}g_{\rho\nu}$ has the same matrix components
as $(I^{{a}})_{\mu}{}^{\nu}$, i.e. $(I^{{a}})^{\mu}{}_{\nu}=(I^{{a}})_{\mu}{}^{\nu}$.

The difference with HKT is that the complex structures \p{compl-str-free} {\it are} now not covariantly constant with respect to the connections
\p{nabla8}. This means, in particular, that the torsions \p{tors1}-\p{tors3} {\it are} not given by the same expression \p{Bismut} for any $I^a$.
For sure, for every complex structure $I^a$, one can still define the Bismut connection \p{Bismut}, such that the covariant derivative
of $I^a$ vanishes. But, as opposed to the HKT case, such Bismut connections are different for different $I^a\,$                                                                              .

At the same time, the complex structures satisfy the weaker conditions
\p{w-cov-compl}
with respect to the covariant derivative \p{nabla8}.
In the considered case, these conditions amount to
\begin{equation}\label{w-cov-compl-1}
(I^a)_\lambda^{\ \rho} \Gamma_{\rho,\mu\nu}+\hat\Gamma_{\lambda,(\mu|\rho|} (I^a)_{\nu)}^{\ \rho}=0
\end{equation}
or, equivalently, to
\begin{equation}\label{w-cov-compl-2}
(I^a)_{\mu}^{\ \rho}\hat\Omega_{\rho,\nu\lambda} + (I^a)_{\nu}^{\ \rho}\hat\Omega_{\rho,\mu\lambda}=
\partial_\rho g_{\mu[\nu} (I^a)_{\lambda]}^{\ \rho} + \partial_\rho g_{\nu[\mu} (I^a)_{\lambda]}^{\ \rho}\, .
\end{equation}

The Nijenhuis concomitants \p{concom} vanish here because $(I^a)_\mu^{\ \nu}$ are constants.
One can also check that the torsion tensor satisfies the conditions \p{IdC}. Thus in the model under consideration we encounter a particular case
of the CKT geometry.

As was indicated above, for $G_1=G_2$, i.e. under the condition
\begin{equation}\label{har-G-8}
\partial_\mu\partial_\mu\,G(x,y)=0\,,
\end{equation}
the model possesses four additional  supersymmetries
 (\ref{h-susy-tr1}), (\ref{h-susy-tr2}).
In the eight-dimensional notations, these transformations can be rewritten as
\begin{equation}\label{susy-tr-8not-d}
\delta x^{\mu}= i \,\eta \tilde I^{\mu}{}_{\nu}\psi^{\nu}+i \,\eta_p(\tilde I^{{p}})^{\mu}{}_{\nu}\psi^{\nu}\,,\qquad
\delta \psi^{\mu}= \eta \tilde I^{\mu}{}_{\nu}\dot x^{\nu}+ \eta_p(\tilde I^{{p}})^{\mu}{}_{\nu}\dot x^{\nu}\,.
\end{equation}
Correspondingly, the singlet and triplet parts of the N\"other supercharges (\ref{h-susy-char}) of this ${\cal N}{=}\,4$ supersymmetry
take the form ($p=1,2,3$)
\begin{eqnarray}\label{h-susy-char1-s-2g}
\tilde Q&=& \psi^{\sigma}\tilde I_{\sigma}{}^{\mu}
\Big(p_{\mu} -{\textstyle\frac{i}{6}}\,\Omega_{\mu,\nu\lambda}\psi^{\nu}\psi^{\lambda} +
{\textstyle\frac{i}{12}}\,C_{\mu\nu\lambda}\,\psi^{\nu}\psi^{\lambda}\Big) ,
\\[6pt]
\label{h-susy-char1-tr-2g}
\tilde Q^{{p}}&=& \psi^{\sigma}(\tilde I^{{p}})_{\sigma}{}^{\mu}\Big(
p_{\mu} -{\textstyle\frac{i}{6}}\,\Omega_{\mu,\nu\lambda}\psi^{\nu}\psi^{\lambda}+{\textstyle\frac{i}{12}}\,C_{\mu\nu\lambda}\,
\psi^{\nu}\psi^{\lambda}\Big),
\end{eqnarray}
with the  complex structures $\tilde I_\mu^{\ \nu}, (\tilde I^a)_\mu^{\ \nu}$ defined in \p{ad-compl-str-free}.
These complex structures, together with the structures \p{compl-str-free}, form the Clifford algebra
\begin{equation}\label{d-Clif-compl}
\begin{array}{c}
 \{I^{a}, I^{b}\}=-2\,\delta^{ab}{\bf 1}_8\,,\qquad
\tilde I^2 = - {\bf 1}_8\,,\qquad
\{\tilde I^{p},\tilde I^{q}\}=-2\,\delta^{pq}{\bf 1}_8\,,  \\[6pt]
 \{\tilde I,\tilde I^{p}\} = \{\tilde I, I^{a}\} = \{I^a,\tilde I^{p}\} = \  0 \, .
\end{array}
\end{equation}

\subsection{${\cal N}=2$ superfield formulation}
To better understand  the geometric meaning of the spin connection \p{hat-om}, it is instructive to express the Lagrangian
\p{act-c-g} in terms of ${\cal N}=2$ superfields. In the ${\cal N}=2$ superfield language, the models studied in this paper prove to realize some special cases of
the twisted Dolbeault complexes, such that they admit extended supersymmetries.

We denote
$$
\theta^{1 1^\prime}=\eta\,,\quad \theta^{22^\prime}=-\bar\eta\,,\quad \theta^{1 2^\prime}=\theta\,,\quad \theta^{2 1^\prime}=\bar\theta\,.
$$
Then the ${\cal N}=4$ superfield  $X^{i\alpha}$ defined in (\ref{X}) can be rewritten in terms of the $SU(2)_{PG}$ doublet of chiral ${\cal N}=2$
superfields $Z^{\cal A}, \bar{Z}^{\bar {\cal A}}$:
\begin{equation}\label{X-n2}
X^{11} \equiv {\cal X}^1\,,\qquad X^{12} \equiv {\cal X}^2\,,\qquad X^{22} \equiv -\bar{\cal X}^{\bar 1}\,,
\qquad X^{21} \equiv \bar{\cal X}^{\bar 2}\,,
\end{equation}
\begin{eqnarray}\label{X-n2-nb}
{\cal X}^{\cal A}&=&Z^{\cal A}+\eta\, \epsilon^{\cal AB}\bar D\bar Z^{\bar {\,\cal B}} -i\eta\bar\eta \dot{Z}^{\cal A} \,,\\[4pt]
\bar{\cal X}^{\bar{\cal A}}&=& \bar Z^{\bar{\cal A}}-\bar \eta\, \epsilon^{{\cal \bar A}\bar{\,\cal B}} D Z^{\,\cal B}
+i\eta\bar\eta \dot{\bar Z}^{\bar{\cal A}}  \,.\label{X-n2-b}
\end{eqnarray}
Here ${\cal A}=1,2$, $\bar{\cal A}=1,2$
\footnote{The index ${\cal A}$ labeling complex chiral superfields should not be confused with the real 4-vector index $A$.}.
 Defining the  ${\cal N}=2$ superspace covariant derivatives as
\begin{equation}\label{der-n2}
D=\partial_{\theta}-i\bar \theta \partial_t\,,\qquad \bar D=-\partial_{\bar\theta}+i \theta \partial_t\,,
\end{equation}
we find that the ${\cal N}=4$ superspace constraints \p{const1} imply chirality of the ${\cal N}=2$ superfields $Z^{\cal A}$, $\bar D Z^{\cal A}=0$,
and antichirality  of $\bar Z^{\bar{\cal A}}$, $D\bar Z^{\bar{\cal A}}=0\,$.
 In their $\theta$ expansions, these superfields contain the full set of components,  $(x^{i\alpha}, \chi^{i\alpha'})$, of a
 $({\bf 4, 4, 0})$ multiplet:
\begin{eqnarray}\nonumber
Z^1&=&x^{11}+\theta\, \chi^{11^\prime} -i\theta\bar\theta\, \dot{x}^{11} \,,\nonumber\\[4pt]
Z^2&=&x^{12}+\theta\, \chi^{21^\prime} -i\theta\bar\theta\, \dot{x}^{12} \,,\nonumber\\[4pt]
\bar Z^{\bar 1}&=&-(x^{22}-\bar\theta\, \chi^{22^\prime} +i\theta\bar\theta\, \dot{x}^{22}) \,,\nonumber\\[4pt]
\bar Z^{\bar 2}&=&x^{21}-\bar\theta\, \chi^{12^\prime} +i\theta\bar\theta\, \dot{x}^{21} \,.\nonumber
\end{eqnarray}

The ${\cal N}=4$ superfield $Y^{{i}^\prime \alpha^\prime }$ (\ref{Y}) has an analogous ${\cal N}=2$ superfield decomposition:
\begin{equation}\label{Y-n2}
Y^{2^\prime2^\prime} \equiv {\cal Y}^1\,,\qquad Y^{ 1^\prime 2^\prime} \equiv {\cal Y}^{2}\,,
\qquad Y^{1^\prime1^\prime} \equiv -\bar{\cal Y}^{\bar 1}\,,\qquad Y^{ 2^\prime 1^\prime} \equiv \bar{\cal Y}^{\bar 2}\,,
\end{equation}
\begin{eqnarray}\label{Y-n2-nb}
{\cal Y}^{\cal M}&=&V^{\cal M}-\bar\eta\, \epsilon^{{\cal M}{\cal N}}\bar D\bar V^{\bar{\cal N}} +i\eta\bar\eta \dot{V}^{\cal M} \,,\\[4pt]
\bar{\cal Y}^{\bar{\cal M}}&=& \bar V^{\bar{\cal M}}+ \eta\, \epsilon^{\bar{\cal M}\bar{\cal N}} D V^{{\cal N}} -
i\eta\bar\eta \dot{\bar V}^{\bar{\cal M}}  \,,\label{Y-n2-b}
\end{eqnarray}
where  ${\cal M}=1,2$, $\bar{\cal M}=1,2\,$. The chiral superfields $V^{\cal M}$, $\bar D V^{\cal M}=0$, and the antichiral superfields
$\bar V^{\bar{\cal M}}$, $D\bar V^{\bar{\cal M}}=0$, are defined by
\begin{eqnarray}\nonumber
V^1&=&y^{2^\prime2^\prime}+\theta\, \psi^{22^\prime} -i\theta\bar\theta\, \dot{y}^{2^\prime2^\prime} \,,\nonumber\\[4pt]
V^2&=&y^{2^\prime 1^\prime }+\theta\, \psi^{21^\prime} -i\theta\bar\theta\, \dot{y}^{ 2^\prime 1^\prime} \,,\nonumber\\[4pt]
\bar V^{\bar 1}&=&-(y^{1^\prime1^\prime}-\bar\theta\, \psi^{11^\prime} +i\theta\bar\theta\, \dot{y}^{1^\prime1^\prime}) \,,\nonumber\\[4pt]
\bar V^{\bar 2}&=&y^{1^\prime 2^\prime}-\bar\theta\, \psi^{12^\prime} +i\theta\bar\theta\, \dot{y}^{1^\prime 2^\prime } \,.\nonumber
\end{eqnarray}
Their chirality and antichirality follow from the ${\cal N}=4$ superspace constraint~(\ref{const2}).
Note that the complex coordinates $Z^{\cal A}, V^{\cal M}$ appear as the components of the complex eight-vector
$x_\mu + i(I^3)_\mu^{\ \nu} x_\nu$,  with $I^3 = I^{a=3}$ and $I^a$ being defined in  \p{compl-str-free}.

The ${\cal N}=4$ superfield action (\ref{act}) can be equivalently rewritten in the ${\cal N}=2$ superfield form as
\be
S = \int dt d\theta d\bar\theta {\cal L}(Z^{\cal A}, \bar{Z}^{\bar{\cal A}}, V^{\cal M}, \bar{V}^{\bar{\cal M}})\,,\label{act-c-n2}
\ee
with
\begin{eqnarray}
{\cal L}(Z^{\cal A}, \bar{Z}^{\bar{\cal A}}, V^{\cal M}, \bar{V}^{\bar{\cal M}}) &=&
\int d\eta d\bar\eta\, {\cal L}({\cal X}^{\cal A},\bar{\cal X}^{\bar{\cal A}},{\cal Y}^{\cal M}, \bar{\cal Y}^{\bar{\cal M}}) \nn
&=& \!\quad
\Big[(\partial_{\cal A}\partial_{\bar{\cal B}}{\cal L})+
\epsilon_{{\cal A}{\cal C}}\,\epsilon_{\bar{\,\cal B}\bar{\cal D}}\,
(\partial_{\cal D}\partial_{\bar{\cal C}}{\cal L}) \Big]D Z^{\cal A} \bar D\bar Z^{\bar{\,\cal B}} \nn
&& -\Big[(\partial_{\cal M}\partial_{\bar{\cal N}}{\cal L})+
\epsilon_{{\cal M}{\cal K}}\,\epsilon_{\bar{\cal N}\bar{\cal L}}\,(\partial_{\cal L}\partial_{\bar{\cal K}}{\cal L}) \Big]
D V^{\cal M} \bar D\bar V^{\bar{\cal N}}\nonumber \\[4pt]
&& +\,\epsilon_{{\cal A}{\cal B}}\,\epsilon_{{\cal M}{\cal N}}\,(\partial_{\bar{\cal B}}\partial_{\bar{\cal N}}{\cal L})
\,D Z^{\cal A} D V^{\cal M}\nonumber \\[4pt]
&& -\,\epsilon_{\bar{\cal A}\bar{\cal B}}\,\epsilon_{\bar{\cal M}\bar{\cal N}}\,(\partial_{\cal B}\partial_{\cal N}{\cal L})
\,\bar D \bar Z^{\bar{\cal A}} \bar D\bar V^{\bar{\cal M}}\,.\label{act-c-n23}
\end{eqnarray}
 The action \p{act-c-n2} with the Lagrangian \p{act-c-n23} is a particular case of the ${\cal N}=2$ superfield action
for CKT systems given in~\cite{DI}.

Note an important difference between the ${\cal N}=2$ superfield expansions of
${\cal X}^{\cal A}$ and ${\cal Y}^{\cal M}$ \p{X-n2-nb} and \p{Y-n2-nb}:
in the second case the Grassmann variable $\eta$ is replaced by $\bar\eta$. Clearly, this difference is not
important, when considering the systems with the single $X^{i\alpha}$ or the single $Y^{i' \alpha' }$, since one can always redefine
$\eta \leftrightarrow \bar\eta$.  
It becomes, however, essential while considering
these supermultiplets together, because it gives rise to different transformation laws of the corresponding chiral superfields under
the hidden ${\cal N}=2$ supersymmetry acting as shifts of $\eta, \bar\eta$.
 From the ${\cal N}=2$ superfield point of view,
it is exactly this difference that is responsible for the emergence of the CKT
geometry in such a system  \cite{DI}.

Indeed, the presence of $\eta$ in \p{X-n2-nb} and $\bar\eta$ in \p{Y-n2-nb} leads to the appearance of the
mixed structures
$D Z^{\cal A} D V^{\cal M}$, $\bar D \bar Z^{\bar{\cal A}} \bar D\bar V^{\bar{\cal M}}$ (the last two terms in \p{act-c-n23}).
They  have the same form as \p{holtorsion} and thus yield  extra {\it holomorphic} torsion components. Such mixed terms involving
the same type of ${\cal N}=2$ spinor derivatives are absent in the ${\cal N}=2$ superfield formulation of the HKT models: only
the structures similar to the first two terms in \p{act-c-n23} appear there \cite{Hull,DI}. These structures produce the target space metric and
the Bismut-type torsions having no (anti)holomorphic components.

We can now understand the crucial difference between the CKT and the HKT models. Being expressed through ${\cal N} = 2$ superfields,
the latter represent special types of the usual  Dolbeault models \p{Dolb}, \p{WZ}, but without  \p{holtorsion}. At the same time,
the CKT models are characterized by the inevitable presence of the terms of type \p{holtorsion} in the corresponding ${\cal N} = 2$ superfield actions.

In the operator language, the models of this class are obtained by a similarity transformation of the complex supercharges,
 \be
\lb{simil}
Q \ \to \ e^{{\cal B}_{jk} \psi^j \psi^k} Q e^{{-\cal B}_{jk} \psi^j \psi^k}\, .
 \ee
In the CKT models, the fermion charge $\psi_j \bar \psi^j$ {\it is not conserved}! This is another difference from the HKT models
where the fermion charge {\it is} conserved. As there are three different complex structures and three ways to introduce complex coordinates
(and, hence, three ways to define the chiral superfields $Z^{\cal A}$), there are three such conserved charges $F^a = (I^a)_{\mu\nu} \psi^\mu \psi^\nu $.
Then the triplet HKT supercharges are obtained from the singlet one through the commutation relation \cite{Verb,QHKT},
  \be
\lb{commFQ}
S^a = [Q, F^a] \, .
 \ee
In the interacting CKT case, there are no conserved fermion charges and, therefore, no relation like \p{commFQ} can be written.

On the Lagrangian level, the same phenomenon manifests itself as the invariance of the general HKT actions under one of the
automorphism $SU(2)$ symmetries, that one which is realized solely on fermionic fields, and the non-invariance of the general CKT actions under
{\it any} of the  $SU(2)$ symmetries involved. These specific features were already mentioned in the previous Sections.

\setcounter{equation}0
\section{Quantum supercharges and geometry}
We construct first the quantum supercharges for the CKT model with the metric \p{g8}. The experience thus acquired will allow us
to find quantum supercharges for a generic CKT(OKT) manifold and, based on this, to suggest new definitions of CKT and OKT geometries.

A general recipe to quantize supersymmetric theories was given in \cite{howto}. The correct quantum {\it supercharges}
(not the Hamiltonian !) are obtained by ordering the classical supercharges according to symmetric Weyl prescription.
Obviously, Weyl ordering of real classical expressions gives Hermitian operators, $Q^\dagger = Q$. These operators
act on the wave functions normalized with the flat measure. In our case,
  \be
\label{normflat}
\int \prod_\mu dx^\mu  \prod_a  d \psi^a  d \bar \psi^a  \,
\exp\{\bar \psi^a \psi^a\} \, \bar \Psi(x^\mu,
\bar \psi^a)
 \Psi(x^\mu, \psi^a) \ =\ 1\, ,
 \ee
where the pair $(\psi^a, \bar \psi^a)$ corresponds to  some particular splitting of the flat $(\psi^{\underline{\mu}})$
into the canonical coordinates and  momenta. The choice of such a splitting is quite arbitrary. It does not affect the results.

For SQM sigma models describing the motion over curved manifold, it is more convenient to consider
{\it covariant} wave functions normalized
 with the extra factor $\sqrt{g}$. The Hermitian with respect to this Riemannian measure
operators ${\cal O}$  are obtained from the ``flat''
Weyl-ordered operators by the similarity transformation,
 \be
\lb{simtransf}
 {\cal O}^{\rm cov} \ =\ g^{-1/4} {\cal O}^{\rm flat} g^{1/4} \, .
 \ee

Applying this recipe to the singlet supercharge \p{susy-char1-s-2g}, one reproduces the
{\it same} expression as \p{susy-char1-s-2g} with the same order of quantum operators.

It is not so straightforward to derive the expression for the triplet quantum supercharge, but it can be done using the results of Sect.
4.4. Consider a  pair $(Q, Q^{\hat a})$ with some particular $a=\hat a$. As we have seen, in the ${\cal N} = 2$ superfield formulation
our model amounts to the twisted Dolbeault complex with extra holomorphic and antiholomorphic torsion components.
These (anti)holomorphic components are given by the tensor $H_{\mu\nu\lambda}(C,I)$ defined in
 \p{holdobavka}, with $I \equiv I^{\hat a}$. Now we can understand the origin of the coefficient $1/4$ in  \p{holdobavka}. Just with this coefficient,
 the property $H_{\mu\nu\lambda}(H,I) = H_{\mu\nu\lambda}$ holds. Thus the total torsion $C_{\mu\nu\lambda}$ is expressed as a sum of
 its (anti)holomorphic part $H_{\mu\nu\lambda}$ and a part $B_{\mu\nu\lambda}$ satisfying the condition \p{holproj} which strips off
 all (anti)holomorphic components from $B_{\mu\nu\lambda}\,$.

 As was shown in \cite{FIS}, the full  quantum covariant ${\cal N}=2$ supercharges
(derived by the prescription explained above) are the sum of ``untwisted'' quantum supercharges (i.e., without holomorphic torsions)
and the holomorphic and antiholomorphic
parts $\sim C_{jkl}, \sim C_{\bar j\bar k \bar l}$. In our real vector notation, the latter correspond just to the term
$\frac i{12} H_{\mu\nu\lambda} \psi^\mu \psi^\nu \psi^\lambda $ for the supercharge $Q$ and the term $\frac i{12} H_{\mu\nu\lambda}
(I^{\hat a})_\sigma^{\ \mu} \psi^\sigma \psi^\nu \psi^\lambda $ for the supercharge $Q^{\hat a}$ (with the fixed value of the index $a={\hat a}$),
which are {\it the same} as in the classical supercharges \p{susy-char1-s-2g}, \p{susy-char1-tr-2g}
   \footnote{This follows from our Weyl ordering prescription and from the total antisymmetry of $H_{\mu\nu\lambda}$ and of the tensor
   \be
\lb{G=IH}
  G_{\mu\nu\lambda} =
  (I^{\hat a})_\mu^{\ \alpha}   H_{\alpha\nu\lambda} \ =\ \frac 14 \left[ (I^{\hat a})_\mu^{\ \alpha}   C_{\alpha\nu\lambda}
 + (I^{\hat a})_\nu^{\ \alpha}   C_{\mu \alpha \lambda} + (I^{\hat a})_\lambda^{\ \alpha}   C_{\mu \nu \alpha} -
  (I^{\hat a})_\mu^{\ \alpha}   (I^{\hat a})_\nu^{\ \beta} (I^{\hat a})_\lambda^{\ \gamma}    C_{\alpha \beta \gamma} \right].
   \ee
}.
 The untwisted part of $Q^{\hat a}$ involves the Bismut torsion tensor $B^{\hat a}_{\mu\nu\lambda}$ for the given complex structure $I^{\hat a}$, and it is
represented by the same expression as in \p{susy-char1-tr-2f} with the  order of indices indicated there \cite{QHKT}.
Finally, we obtain the following expressions for the quantum supercharges,
\begin{eqnarray}
\label{Qqusing}
Q&=& \psi^{\mu}
\Big(p_{\mu} -{\textstyle\frac{i}{2}}\,\Omega_{\mu,\nu\lambda}\psi^{\nu}\psi^{\lambda} +
{\textstyle\frac{i}{12}}\,C_{\mu\nu\lambda}\,\psi^{\nu}\psi^{\lambda}\Big) ,
\\[6pt]
\label{Qqutrip}
Q^{{\hat a}}&=& \psi^{\sigma}(I^{{\hat a}})_{\sigma}{}^{\mu}\Big(
p_{\mu} -{\textstyle\frac{i}{2}}\,\Omega_{\mu,\nu\lambda}\psi^{\nu}\psi^{\lambda}
-{\textstyle\frac{i}{4}}\,B^{\hat a}_{\mu\nu\lambda} \,\psi^{\nu}\psi^{\lambda}
+ {\textstyle\frac{i}{12}}\,H^{\hat a}_{\mu\nu\lambda} \,\psi^{\nu}\psi^{\lambda}\Big),
\end{eqnarray}
 with
  \be
  \lb{C=B+H}
C_{\mu\nu\lambda} \ =\ B^{\hat a}_{\mu\nu\lambda} + H^{\hat a}_{\mu\nu\lambda}\, .
  \ee
$B^{\hat a}_{\mu\nu\lambda}$ is the Bismut torsion for the complex structure $I^{\hat a}$ and $H^{\hat a}_{\mu\nu\lambda}(C, I^{\hat a})$ was defined in
\p{holdobavka}.

Note that the {\it classical} triplet supercharge can also be represented as \p{Qqutrip}. This geometric representation should be valid
for {\it any} CKT manifold (the more compact representation \p{susy-char1-tr-2g} probably holds only for our particular model with the metric \p{g8}).

It remains to prove that the supercharges \p{Qqusing}, \p{Qqutrip} indeed form the quantum ${\cal N}=4$ superalgebra.
The  ${\cal N}=2$ superalgebra relations for a fixed index $a={\hat a}$,
 $$Q^2 = (Q^{\hat a})^2 = H, \ \ \ \ \ \ \ \ \ \ \ \{Q, Q^{\hat a} \}_+ = 0\,, $$
follow from the analysis in Ref. \cite{FIS}. It remains to show that $\{Q^{\hat a}, Q^{\hat b}\}_+ = 0$ when ${\hat a} \neq {\hat b}$.

We will consider first our specific CKT model with the metric \p{g8} and the classical supercharges \p{susy-char1-tr-2g}.
The best  way to proceed is
to capitalize  on the fact that the quantum supercharges \p{Qqutrip} are obtained from the classical supercharges
\p{susy-char1-tr-2g} by Weyl ordering and on the well-known assertion that the Weyl symbol of an
(anti)commutator of two operators is given by the
Gr\"onewold-Moyal (G-M) bracket of their Weyl symbols \cite{Moyal}.
For a system involving fermion variables, the latter is defined as
\begin{eqnarray}
 i\hbar \{A, B\}_{GM} \!\!\!\!\!&=&\!\!\!\!\! 2\sinh \left\{ \frac \hbar 2 \sum_a \left( \frac {\partial^2}{\partial \psi^{(2)}_a
\partial \bar \psi^{(1)}_a } -
\frac {\partial^2}{\partial \psi^{(1)}_a \partial \bar \psi^{(2)}_a }  \right)  +
\frac {i\hbar}2 \sum_i \left( \frac {\partial^2}{\partial q^{(1)}_i \partial p^{(2)}_i } -
\frac {\partial^2}{\partial q^{(2)}_i \partial p^{(1)}_i }    \right) \right\} \nonumber\\
&&
\left. A\left(p_i^{(1)}, q_i^{(1)}; \bar \psi_a^{(1)}, \psi^{(1)}_a \right)
B\left(p_i^{(2)}, q_i^{(2)}; \bar \psi_a^{(2)}, \psi^{(2)}_a \right) \right|_{1=2} \, ,
 \lb{Moyal}
\end{eqnarray}
where $(p_i,\, q_i)$ are bosonic and $(\psi^a, \,\psi^a)$ --- fermionic canonically conjugate pairs. We introduced here
$\hbar$ to make more explicit the classical limit $\hbar \to 0$. In this limit, only the first term in the expansion of {\it sinh} actually
contributes, and the G-M bracket is reduced to the Poisson bracket. As we have
mentioned before, the Poisson
bracket $\{Q^a, Q^b\}_{P.B.}$ vanishes for $a \neq b$.

For the supercharges \p{susy-char1-tr-2g}, also the cubic term in the expansion of $\sinh$ with six partial derivatives
might contribute. There are two such possible contributions: the contribution with the derivatives of vielbein
 from the first terms in  \p{susy-char1-tr-2g}
and the contribution $\sim \partial^6/(\partial \psi)^6$ from the second terms. The first contribution is
\be
\lb{derivviel}
\sim (\partial_\alpha e^\sigma_a) I_\sigma^{\ \nu} (\partial_\nu e_a^{\ \mu}) J_\mu^{\ \alpha} \, ,
  \ee
where $I,J$ are two different complex structures \p{compl-str-free}. It is not difficult to see that, for the diagonal
metric \p{g8} and the naturally chosen diagonal vielbeins, \p{derivviel} vanishes.

The second possible contribution has the form
 \be
\lb{deriv6psi}
 \sim \hat \Omega_{\mu, \nu\lambda} \hat \Omega_{\alpha, \gamma\delta} I_\sigma^{\ \mu} J_\rho^{\ \alpha}
\left(g^{\sigma\rho} g^{\nu\gamma} g^{\lambda\delta} + 2 g^{\sigma\gamma} g^{\nu\delta} g^{\lambda\rho} \right).
   \ee
This also vanishes, as can be checked using the explicit expressions \p{tors1}-\p{tors3}, \p{Om1}-\p{Om3}.

\subsection{Generic CKT/OKT geometry. Theorems and definitions}
The calculation of $\{Q^a, Q^b\}_{G-M}$ given above was performed for the particular model \p{g8}.
But one can show that it is proportional to $\delta^{ab}$ also in the most general case.
We are going to prove the following

\begin{thm}
 Let $I^a$ be three (seven) integrable complex structures with the vanishing  Nijenhuis concomitants \p{concom}
that satisfy the Clifford algebra \p{Clifford}. Let $C_{\mu\nu\lambda}$ be a totally antisymmetric torsion tensor representable as
in \p{C=B+H} for each complex structure. Let each pair of the supercharges \p{Qqusing}, \p{Qqutrip} satisfy the minimal
 ${\cal N} = 2$ superalgebra. Then they all together satisfy
 the extended ${\cal N} = 4$ (   ${\cal N} = 8$)
supersymmetry algebra, and we are dealing with a CKT (OKT) geometry.
\end{thm}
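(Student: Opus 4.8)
The plan is to isolate the single relation that is not already covered by the hypotheses. By assumption each pair $(Q,Q^{\hat a})$ closes into the minimal ${\cal N}=2$ algebra, so $Q^2=(Q^{\hat a})^2=H$ and $\{Q,Q^{\hat a}\}_+=0$ are granted (these are exactly the relations established in \cite{FIS} for a Dolbeault complex twisted by the holomorphic torsion $H^{\hat a}$). Hence the entire weight of the theorem rests on showing that the off-diagonal anticommutators vanish, $\{Q^{\hat a},Q^{\hat b}\}_+=0$ for $\hat a\neq\hat b$. Once this is in hand, all supercharges square to the \emph{same} $H$ and mutually anticommute in the pattern \p{DB-susy}, and together with the Clifford relation \p{Clifford} this is precisely the ${\cal N}=4$ (for seven structures, ${\cal N}=8$) superalgebra; the geometry is then CKT (OKT) by the definitions of Section~1.2.

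First I would recast the problem in terms of Weyl symbols. Since the quantum supercharges \p{Qqusing}, \p{Qqutrip} are the Weyl-ordered forms of the classical expressions, the Weyl symbol of $\{Q^{\hat a},Q^{\hat b}\}_+$ is the Gr\"onewold--Moyal bracket \p{Moyal} of the two symbols. Expanding the $\sinh$ in powers of $\hbar$, the leading term is the classical Poisson bracket, computed from \p{DB-8}. I would show that this term is proportional to $\delta^{\hat a\hat b}$: the pure-momentum sector contributes $I^{\hat a}I^{\hat b}+I^{\hat b}I^{\hat a}=-2\delta^{\hat a\hat b}$ by \p{Clifford}; the momentum--connection cross terms are controlled by the weak covariant constancy \p{w-cov-compl} in the form \p{w-cov-compl-2}; and the remaining six-fermion pieces are tamed by the torsion identity \p{IdC} together with the split \p{C=B+H} and the projector property $H_{\mu\nu\lambda}(H,I)=H_{\mu\nu\lambda}$. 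The vanishing Nijenhuis concomitants \p{concom1} guarantee that no anomalous $\partial I^a$ terms survive, so the classical bracket vanishes off-diagonal.

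The genuine difficulty is the quantum correction: the cubic term in the expansion of $\sinh$, carrying six functional derivatives. As in the model \p{g8}, there are exactly two structures to control. The first, \p{derivviel}, is quadratic in the derivatives of the vielbein and comes from pairing the spin-connection pieces of the two first factors in \p{Qqutrip}; abstractly it contracts two distinct complex structures against an expression symmetric under exchange of the two derivative slots, so it must be symmetric in $\hat a\hat b$ and drops out of the antisymmetric combination. The second, \p{deriv6psi}, is sextic in $\psi$ and built from two torsionful connections \p{hat-om}; here I would trade the spin connection for the Bismut torsion using \p{C=B+H}, apply the holomorphic-projection property \p{holproj} of $B^{\hat a}$ together with \p{IdC}, and show the result collapses to a $\delta^{\hat a\hat b}$ structure. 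In the explicit model these cancellations were verified directly from the component formulas \p{tors1}--\p{tors3} and \p{Om1}--\p{Om3}; replacing that check by a manipulation that uses only the abstract Clifford, weak-covariant-constancy and torsion conditions is where the real effort lies.

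Finally, for the OKT extension the same three steps are repeated for the seven complex structures; the only new ingredient is that the $7$-dimensional Clifford algebra again forces every off-diagonal anticommutator to vanish, so that the supercharges, enlarged by the hidden structures $\tilde I,\tilde I^p$ of \p{ad-compl-str-free}, assemble into the ${\cal N}=8$ algebra \p{d-Clif-compl}. Identifying the resulting geometry is then immediate: reading the closure conditions backwards, the vanishing of all anticommutators encodes exactly \p{Clifford}, \p{w-cov-compl} and \p{IdC}, which is the very definition of a CKT (OKT) manifold given in Section~1.2.
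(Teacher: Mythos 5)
Your proposal and the paper part ways at the very first step. The paper does \emph{not} prove the theorem by computing $\{Q^{\hat a},Q^{\hat b}\}_{+}$ with the Gr\"onewold--Moyal bracket in the general case; that computation is carried out only for the particular conformally flat model \p{g8}, where the vanishing of \p{derivviel} and \p{deriv6psi} can be checked by hand against the explicit components \p{tors1}--\p{tors3}, \p{Om1}--\p{Om3}. For the general theorem the paper instead shows that its hypotheses reproduce the \emph{known} defining conditions of a CKT/OKT geometry: \p{Clifford} and \p{concom} are assumed, while \p{w-cov-compl} and \p{eq-cs1a} are \emph{derived} --- the first from the decomposition \p{C=B+H}, the covariant constancy of each $I^{\hat a}$ with respect to its own Bismut connection, and the total antisymmetry of $H^{\hat a}$; the second by decomposing the torsion $3$-form into its $(p,q)$ components with respect to $I^{\hat a}$, observing that $C_{2,1}\propto\partial\omega$ and $C_{1,2}\propto\bar\partial\omega$ come from the Bismut part while $C_{3,0}$ and $C_{0,3}$ are the $\partial$- and $\bar\partial$-exact pieces of $H^{\hat a}$, and checking that both sides of \p{eq-cs1a} then reduce to $2i\left(\bar\partial C_{3,0}-\partial C_{0,3}\right)$. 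This bigrading computation is the substantive content of the paper's proof and is entirely absent from your argument: you simply \emph{invoke} \p{w-cov-compl} and \p{IdC} to control the cross terms, but neither is among the hypotheses of the theorem --- they must be extracted from the split \p{C=B+H} first.

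There is also a concrete error in the part of your plan that you do sketch. The object to be controlled is an \emph{anticommutator}, which is symmetric in $(\hat a,\hat b)$; showing that a contribution such as \p{derivviel} is symmetric under $\hat a\leftrightarrow\hat b$ therefore does not make it ``drop out of the antisymmetric combination'' --- there is no antisymmetric combination, and a symmetric off-diagonal remainder is exactly what would ruin the $\delta^{\hat a\hat b}$ structure. What is needed is the vanishing of the off-diagonal components themselves, which the paper obtains only for the diagonal vielbeins adapted to \p{g8}. Since your strategy defers precisely the higher-order Moyal terms in the general case to ``where the real effort lies,'' and the one argument offered for them fails, the proposal does not close. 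The economical route is the paper's: once \p{Clifford}, \p{concom}, \p{w-cov-compl} and \p{eq-cs1a} are all in place, closure of the extended algebra is the already-established characterization of CKT/OKT geometry, and no general Moyal computation is required.
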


\begin{proof}
 The definition of the CKT/OKT geometries given in the Introduction involves the conditions \p{w-cov-compl}, \p{eq-cs1a} and also
\p{concom}. The latter is also among the conditions we are imposing now.
It is this condition involving a pair of complex structures which guarantees the property $\{Q^a, Q^b\} \propto \delta^{ab}$.
As for the conditions \p{w-cov-compl}, \p{eq-cs1a}, they are formulated for each complex structure separately and should be
there once we require the fulfillment of the  ${\cal N} = 2$ algebra for each pair $(Q, \,Q^{\hat a})$.
Let us check this explicitly.

$\bullet$ The property \p{w-cov-compl} follows from the decomposition \p{C=B+H}, the fact that the complex structures
are covariantly constant with respect to their Bismut connections and the total antisymmetry of $H^{\hat a}_{\mu\nu\lambda}$ in its lower case indices.

$\bullet$ To prove  \p{eq-cs1a}, we introduce, for each complex structure $I^{\hat a}$, the corresponding complex coordinates $z^{j({\hat a})}$ and
$\bar z^{\bar j ({\hat a})}$ (such that the metric tensor is Hermitian, $g_{\mu\nu} \to h_{j \bar k}$) and
then represent the 3-form $C \ =\ C_{\mu\nu\lambda} dx^\mu \wedge dx^\nu \wedge dx^\lambda$ as a sum of four terms,
 \be
C \ =\ C_{3,0} + C_{2,1} + C_{1,2} + C_{0,3} \, ,
 \ee
where $C_{p,q}$ involves the components with $p$ holomorphic and $q$ antiholomorphic indices. We  can also represent $d = \partial^{\hat a} +
\bar\partial^{\hat a}$, where $\partial^{\hat a}, \, \bar\partial^{\hat a}$ are exterior derivatives holomorphic (antiholomorphic) with respect to
the complex structure $I^{\hat a}$ \footnote{Hereafter, for better readability, the index ${\hat a}$ on $C_{p,q}$,  etc,
will be omitted.}.

The components $C_{2,1}$ and   $C_{1,2}$ come from the Bismut torsion $B_{\mu\nu\lambda}$. They are
\footnote{ See, e.g., Eq.(2.5) in Ref. \cite{IS}.}
 $  C_{2,1} \propto \partial \omega $, \  $  C_{1,2} \propto \bar \partial \omega $,  where
$\omega = h_{j\bar k} dz^j \wedge d\bar z^{\bar k}$. Hence $\partial C_{2,1} = \bar \partial C_{1,2} = 0$.
The components $C_{3,0}$, $C_{0,3}$ come from $H^{\hat a}_{\mu\nu\lambda}\,$. They are also $\partial$ (resp. $\bar\partial$) exact,
$C_{3,0} \propto \partial {\cal B}$,  $C_{0,3} \propto \bar \partial \bar {\cal B}$ \cite{FIS}.

Then the l.h.s. of  Eq. \p{eq-cs1a} is
 \be
\lb{iota-d}
\iota dC \ =\ \iota \left( \bar\partial C_{3,0} + \bar\partial C_{2,1} + \partial C_{1,2} +  \partial C_{0,3} \right)
 = 2i \left(\bar\partial C_{3,0} - \partial C_{0,3} \right).
 \ee

The r.h.s.  of  Eq. \p{eq-cs1a} is
 \be
\lb{d-iota}
 \frac {2i}3 (\partial + \bar\partial) (3C_{3,0} + C_{2,1} - C_{1,2} - 3C_{0,3} ) =
2i \left(\bar\partial C_{3,0} - \partial C_{0,3} \right) + \frac {2i}3 \left(\bar \partial C_{2,1} - \partial C_{1,2} \right).
 \ee

The last term in \p{d-iota} vanishes, $ \bar\partial \partial \omega -  \bar\partial \partial \omega = 0\,$. Thus the expressions
\p{iota-d} and \p{d-iota} coincide. This proves the validity of the condition    \p{eq-cs1a} and, hence,  our theorem.
\end{proof}

The conditions of the Theorem just proven can thus serve as definitions of CKT (OKT) manifold which look as a natural
generalization of the definition of a HKT manifold given in the Introduction. An HKT manifold is a manifold with 3 quaternionic
complex structures for which Bismut connections coincide. A CKT (OKT) manifold is a manifold with 3 (7) complex structures for
which Bismut connections as such do not coincide, but the total connections involving the torsions \p{C=B+H}, with the Bismut parts
``completed'' by extra components which are $\partial$--exact holomorphic and $\bar\partial$--exact antiholomorphic
with respect to each complex structure,  {\it do} coincide. This composite total connection satisfies basic conditions
\p{w-cov-compl}, \p{concom} and \p{eq-cs1a} of the CKT (OKT) geometry.

\setcounter{equation}0
\section{Examples}

Let us consider a few particular cases of the model (\ref{act-c-eq}), (\ref{act-c-g}).

\subsection{$S^4\times S^4$}

In this case,
$$
G_1=\frac{\alpha_1}{(1+\beta_1 x^A x^A)^2}\,,\qquad  G_2=\frac{\alpha_2}{(1+\beta_2 y^M y^M)^2}\,,
$$
where $\alpha_{1,2}$, $\beta_{1,2}$ are constants.
The metric is produced by the superfield Lagrangian
$$
{\cal L}=\frac{\alpha_1}{4(\beta_1)^2X^AX^A}\Big[1+\ln(1+\beta_1X^BX^B)\Big]
-\frac{\alpha_2}{4(\beta_2)^2Y^MY^M}\Big[1+\ln(1+\beta_2Y^KY^K)\Big].
$$
In this case all geometric characteristics (\ref{Gamma1})-(\ref{Gamma3}), (\ref{tors1})-(\ref{tors3}), (\ref{Om1})-(\ref{Om3})
are sums of terms originating from the two independent sectors.
As a result, the system possesses two types of  ${\cal N}=\,4$ supersymmetries.
Besides the singlet supersymmetry generator
\begin{eqnarray}
Q&=&Q_{(1)}+Q_{(2)}\,,\\
Q_{(1)}&=&\chi^{A}\,\Big[p_{A} - {\textstyle\frac{\alpha_1\beta_1 i}{3(1+\beta_1 x\cdot x)^3}}\,\epsilon_{ABCD}\,\chi^{B}\chi^{C}x^D\Big],\\
Q_{(2)}&=&\psi^{M}\,\Big[p_{M} - {\textstyle\frac{\alpha_2\beta_2 i}{3(1+\beta_2 y\cdot y)^3}}\,\epsilon_{MNKL}\,\psi^{N}\psi^{K}y^L\Big],
\end{eqnarray}
one can define two different triplet supersymmetry generators,
\begin{equation}\label{S-44}
S^a=Q_{(1)}^a+Q_{(2)}^a
\end{equation}
and
\begin{equation}\label{Q-44}
Q^a=Q_{(1)}^a-Q_{(2)}^a\,,
\end{equation}
where
\begin{eqnarray}
Q_{(1)}^a&=&-\chi^{A}\eta^a_{AB}p_{B} - {\textstyle\frac{\alpha_1\beta_1 i}{(1+\beta_1 x\cdot x)^3}}\,\chi^{A}\eta^a_{AB}\chi^{B}\,\chi^{C}x^C\,,\\
Q_{(2)}^a&=&-\psi^{M}\eta^a_{MN}p_{N} - {\textstyle\frac{\alpha_2\beta_2 i}{(1+\beta_1 y\cdot y)^3}}\,\psi^{M}\eta^a_{MN}\psi^{N}\,\psi^{K}y^K\,.
\end{eqnarray}
The generators (\ref{S-44}) correspond to the HKT geometry whereas (\ref{Q-44}) to the CKT geometry.

Since two sectors completely decouple, this system is quite similar to that of
two free $({\bf 4, 4, 0})$ multiplets considered earlier.
To gain a less trivial example, it is necessary to allow $G_1$ to depend on $y^M$ and/or $G_2$ to depend on $x^A$.

\subsection{Non-trivial monopole-like case}

Let us consider symmetric case with the Lagrangian
\begin{equation}
{\cal L}=-\frac{X^AX^A}{8}\left(\alpha_1+\frac{\beta_1}{Y^MY^M}\right)
+\frac{Y^MY^M}{8}\left(\alpha_2+\frac{\beta_2}{X^AX^A}\right),
\end{equation}
where $\alpha_{1,2}$, $\beta_{1,2}$ are constants. Then the metric functions are
\begin{equation}
G_1=\alpha_1+\frac{\beta_1}{y^My^M}\,,\qquad \qquad G_2=\alpha_2+\frac{\beta_2}{x^Ax^A}\,.
\end{equation}
The ${\cal N}=\,4$ supersymmetry generators read
\begin{equation}
Q=\chi^{A}p_{A} +\psi^{M}p_{M} +{\textstyle\frac{i}{4}}\,C_{ABM}\chi^{A}\chi^{B}\psi^{M}+{\textstyle\frac{i}{4}}\,C_{MNA}\psi^{M}\psi^{N}\chi^{A}\,,
\end{equation}
where
\begin{equation}
C_{ABM}=\left(\epsilon_{MABK}+
2\delta_{M[A}\delta_{B]K} \right)\frac{\beta_1y^K}{(y\cdot y)^{2}}\,,
\qquad
C_{MNA}=\left(\epsilon_{AMNB}+
2\delta_{A[M}\delta_{N]B} \right)\frac{\beta_2x^A}{(x\cdot x)^{2}}\,,
\end{equation}
and
\begin{eqnarray}
Q^a&=&-\chi^{A}\eta^a_{AB}p_{B} +\psi^{M}\eta^a_{MN}p_{N} \\
&& \nonumber -\,
{\textstyle\frac{i}{6}}\,\eta^a_{AC}(C_{CBM}-2\Omega_{C,BM})\chi^{A}\chi^{B}\psi^{M}
+{\textstyle\frac{i}{6}}\,\eta^a_{MK}(C_{KNA}-2\Omega_{K,NA})\psi^{M}\psi^{N}\chi^{A}\,,
\end{eqnarray}
where
\begin{equation}
C_{ABM}=-\delta_{AB} \frac{\beta_1y^M}{2(y\cdot y)^{2}}\,,
\qquad
C_{MNA}=-\delta_{MN} \frac{\beta_2x^A}{2(x\cdot x)^{2}}\,.
\end{equation}

This case is non-trivial as there is a mixing of fermionic variables from different $({\bf 4,4,0})$ multiplets in the supercharges.

\subsection{Conformally flat OKT manifold}

As the last example,  we consider the system with ${\cal N}=\,8$ supersymmetry and OKT geometry.
The metric should obey the condition $G_1=G_2=G\,$, with $G$ satisfying  the eight-dimensional harmonicity equation
$\partial_\mu\partial_\mu G=0$. The solution with the maximal $O(8)$ symmetry is the conformally flat metric with \cite{ILS}
\begin{equation}
G_1=G_2=\alpha+\frac{\beta}{(x^\mu x^\mu)^3}\,,
\end{equation}
where $\alpha$, $\beta$ are constants. This metric can be derived from the superfield Lagrangian
\begin{equation}
{\cal L}=\frac{\alpha}{8}\left(Y^MY^M-X^AX^A\right)+\frac{\beta}{16X^\mu X^\mu}\left(\frac{1}{Y^MY^M}-\frac{1}{X^AX^A}\right).
\end{equation}
Substituting
$$
\partial_\mu G=-6\beta x^\mu (x^\nu x^\nu)^{-4}
$$
into (\ref{tors1})-(\ref{tors3}), (\ref{Om1})-(\ref{Om3}) and then into (\ref{susy-char1-s-2g})-(\ref{susy-char1-tr-2g}),
(\ref{h-susy-char1-s-2g})-(\ref{h-susy-char1-tr-2g}), we obtain the explicit expressions for the generators of
${\cal N}=\,8$ supersymmetry.

\setcounter{equation}{0}
\section{Summary and outlook}

 The bulk of this paper was devoted to a specific ${\cal N}=\,4$ SQM
 system  formed by the two interacting ({\bf 4, 4, 0}) supermultiplets.
We have shown that in the case when both supermultiplets are of the same nature, this system
describes the HKT Dolbeault complex. On the other hand, the system with two mutually mirror multiplets
corresponds to  CKT geometry. In a particular case with conformally flat harmonic 8-dimensional metric,
the ${\cal N}=4$ supersymmetry is enhanced to  ${\cal N}=\,8$,  yielding OKT geometry.

We have found the
explicit expressions for the classical and quantum supercharges, and this allowed us to make
certain observations concerning the mathematical structure of {\it generic} CKT and OKT complexes.
Their new transparent
definitions were given in Sect. 5. As was explained there,
the inherent feature of CKT and OKT geometries is the presence of holomorphic torsions.

We considered the system with only two (${\bf 4, 4, 0}$) supermultiplets.
Including more ${\cal N}=\,4$ supermultiplets, we would meet a richer situation.
For example, let us consider four  (${\bf 4, 4, 0}$) supermultiplets. When we consider them separately,
the  supercharges associated with each multiplet are ($Q_{(1)}$, $Q_{(1)}^a$); ($Q_{(2)}$, $Q_{(2)}^a$); ($Q_{(3)}$, $Q_{(3)}^a$); ($Q_{(4)}$, $Q_{(4)}^a$).
The interaction of these multiplets can pick up one set of the
supersymmetry generators from the following variants:
\begin{equation}
S^a=Q_{(1)}^a+Q_{(2)}^a+Q_{(3)}^a+Q_{(4)}^a\,;
\end{equation}
\begin{equation}
Q^a=Q_{(1)}^a+Q_{(2)}^a+Q_{(3)}^a-Q_{(4)}^a\,;
\end{equation}
\begin{equation}
\hat Q^a=Q_{(1)}^a+Q_{(2)}^a-Q_{(3)}^a-Q_{(4)}^a\,.
\end{equation}
Supersymmetry with $S^a$ corresponds to the HKT geometry, whereas both
$Q^a$ and $\hat Q^a$ are pertinent to the CKT geometry. Probably, the supercharges $S^a$ arise when all four
multiplets are of the same type, the supercharges $Q^a$ are relevant to a system of three ordinary and one mirror multiplets and
the supercharges  $\hat Q^a$ appear when one is playing with two ordinary and two mirror multiplets.
For some particular class of metrics, the latter system is actually OKT enjoying extended
${\cal N}=\,8$ supersymmetry.
It would be interesting to check all this by explicit calculations.

For HK \cite{Wipf} and HKT \cite{QHKT} systems,  ${\cal N}=\,4$ supersymmetry
is kept intact when gauge fields
of some particular form are added.
To this end, the field strength tensors $F_{\mu\nu}$
should commute with all complex structures. This is true, e.g.,
for a $4D$ (anti)self-dual field \cite{SKon}. It would be interesting to study the possibility of incorporating gauge fields for the CKT and OKT systems.
As was mentioned in the Introduction, in the  ${\cal N}=\,2$ superfield language adding gauge fields  corresponds  to
adding  the Wess-Zumino terms \p{WZ}  to the action.
In the ${\cal N}=\,4$ superfield framework, it  is possible \cite{IL,ISKon,DI} within
the harmonic superspace description.
The simultaneous inclusion of the mutually
mirror ${\cal N}=\,4$ supermultiplets would require the bi-harmonic superspace approach \cite{IN}.

It is also an open question which geometries are associated with the
${\cal N}=\,4$ and ${\cal N}=\,8$ SQM
systems built on the basis of those ${\cal N}=\,4$ supermultiplets (including the mirror ones) \cite{BIKL,ABC,Franc,ILS,DIN8},
which are different from the ({\bf 4, 4, 0}) multiplets exploited here.

\bigskip
\section*{Acknowledgements}

\noindent
S.F. \&\ E.I. acknowledge support from the RFBR
grants 12-02-00517, 13-02-90430 and
a grant of the IN2P3-JINR Programme. They would like to thank SUBATECH,
Universit\'{e} de Nantes, for the warm hospitality
in the course of this study.
The work of E.I. was partially carried out under the Convention N${}^{\rm o}$ 2010 11780.

\bigskip

\renewcommand\theequation{A.\arabic{equation}} \setcounter{equation}0
\section*{Appendix A.\quad Some properties of the `t Hooft symbols}

In this Appendix we present the definition of the `t Hooft symbols and the related identities (for details, see \cite{tH,BVN}).

The ${\rm SO}(4)$ sigma matrices $\sigma_A$, $A=1,2,3,4$ can be chosen in the form
\begin{equation}\label{4sigma}
\sigma_A=(\sigma_1,\sigma_2,\sigma_3;\sigma_4)=(\vec{\sigma};i)\,,\qquad \sigma^\dag_A=(\vec{\sigma};-i)\,,
\end{equation}
where $\vec{\sigma}$ are ordinary Pauli matrices. The matrices (\ref{4sigma}) satisfy the identities
\begin{equation}\label{id-4sigma}
\sigma^\dag_A\sigma_B=\delta_{AB}+i\,\eta^a_{AB}\,\sigma_a\,,\qquad \sigma_A\sigma^\dag_B=\delta_{AB}+i\,\bar\eta^a_{AB}\,\sigma_a\,,
\end{equation}
where $\sigma_a$, $a=1,2,3$ are Pauli matrices and
\begin{equation}\label{thooft}
\eta^a_{BC}=-\eta^a_{CB}=\left\{ \begin{array}{ll}
                      \epsilon_{aBC}&\quad B,C=1,2,3\,,\\
                      \delta_{aB}&\quad C=4 \,,
                    \end{array}
\right.
\end{equation}
\begin{equation}\label{b-thooft}
\bar\eta^a_{BC}=-\bar\eta^a_{CB}=\left\{ \begin{array}{ll}
                      \epsilon_{aBC}&\quad B,C=1,2,3\,,\\
                      -\delta_{aB}&\quad C=4\,,
                    \end{array}
\right.
\end{equation}
 are the `t Hooft symbols. They satisfy the following identities
\begin{eqnarray}\label{hooft1}
\eta^{{a}}_{AB}\eta^{{b}}_{AC}&=& \delta^{{a}{b}}\delta_{BC}+\epsilon^{{a}{b}{c}}\eta^{{c}}_{BC}\,, \\[4pt]
\eta^{{a}}_{AB}\eta^{{a}}_{CD}&=& \delta_{AC}\delta_{BD}-\delta_{AD}\delta_{BC}+\epsilon_{ABCD}\,, \label{hooft2}\\[4pt]
\epsilon^{abc}\eta^{{b}}_{AB}\eta^{{c}}_{CD}&=& \delta_{AC}\eta^{{a}}_{BD}+\delta_{BD}\eta^{{a}}_{AC}-
\delta_{AD}\eta^{{a}}_{BC}-\delta_{BC}\eta^{{a}}_{AD}\,, \label{hooft3}\\[4pt]
\epsilon_{BCDE}\eta^{{a}}_{AE}&=& \delta_{AB}\eta^{{a}}_{CD}-\delta_{AC}\eta^{{a}}_{BD}+
\delta_{AD}\eta^{{a}}_{BC}\,, \label{hooft4}
\end{eqnarray}
where summations over repeated indices are assumed. The analogical identities are valid for $\bar\eta^{{a}}_{AB}\,$,
except  for (\ref{hooft2}), (\ref{hooft4}), which are modified for $\bar\eta^{{a}}_{AB}$ as
\begin{eqnarray}\label{bhooft-2}
\bar\eta^{{a}}_{AB}\bar\eta^{{a}}_{CD}&=&\delta_{AC}\delta_{BD}-\delta_{AD}\delta_{BC}-\epsilon_{ABCD}\,,
\\[4pt]
\label{bhooft-4}
\epsilon_{BCDE}\bar\eta^{{a}}_{AE}&=& -\delta_{AB}\bar\eta^{{a}}_{CD}+\delta_{AC}\bar\eta^{{a}}_{BD}-
\delta_{AD}\bar\eta^{{a}}_{BC}\,.
\end{eqnarray}

Some corollaries of the relations (\ref{hooft4}), (\ref{bhooft-4}) are
\begin{equation}\label{ide}
\epsilon_{ABCD}\eta^{{a}}_{CD}=2\,\eta^{{a}}_{AB}\,,\qquad
\epsilon_{ABCD}\bar\eta^{{a}}_{CD}=-2\,\bar\eta^{{a}}_{AB}\,.
\end{equation}

Finally, we note the important relations between $\eta$ and $\bar\eta$:
 \be
\label{ide-imp}
 \eta^{{a}}_{C[A}\bar\eta^{{b}}_{B]C}=0\,, \qquad
\eta^a_{AB} \bar \eta^b_{AB} = 0 \,.
 \ee

\renewcommand\theequation{B.\arabic{equation}} \setcounter{equation}0
\section*{Appendix B. Mirror multiplet: alternative conventions}

Let us change the order of indices and define the superfield $Y^{\alpha' i'}(t, \theta)$ subject to the constraints
\begin{equation}\label{constnew}
D^{k({i}^\prime}Y^{\alpha^\prime {j}^\prime)  } = 0\,,
\end{equation}
The solution to these constraints is
\begin{equation}
\label{Ynew}
Y^{\alpha^\prime  {i}^\prime  } = y^{\alpha^\prime {i}^\prime  }  -   \psi_{k}^{\alpha^\prime} \theta^{k{i}^\prime} -
i \dot y^{\alpha^\prime  j^\prime  }
\theta_{{j}^\prime k } \theta^{k{i}^\prime} +
{\textstyle\frac{i}{3}}\,\dot\psi_{j}^{\alpha^\prime} \theta^{j{k}^\prime}\theta_{{k}^\prime k}\theta^{k{i}^\prime}
+ {\textstyle\frac{1}{12}}\,\ddot y^{\alpha^\prime {i}^\prime  } \,
\theta^{k{k}^\prime} \theta_{{k}^\prime j} \theta^{j{j}^\prime} \theta_{{j}^\prime k}\,.
\end{equation}
This differs from \p{Y} by a more smooth flow of indices, which facilitates  the conversion to vector notations. To this end,
it is natural to stick, instead of
\p{4vect-2} -- \p{4vect-2a}, to the definitions

\begin{equation}\label{ysigma}
y_{{i}^\prime \alpha^\prime }={\textstyle\frac{1}{\sqrt{2}}}\,y^{M}(\sigma_M)_{{i}^\prime \alpha^\prime }\,,\quad
y^{\alpha^\prime {i}^\prime}=-{\textstyle\frac{1}{\sqrt{2}}}\,y^{M}(\sigma^\dagger_M)^{\alpha^\prime {i}^\prime}\,,
\end{equation}
\begin{equation}\label{pYpart}
Y^{\alpha^\prime {i}^\prime}=-{\textstyle\frac{1}{\sqrt{2}}}\,Y^{M}(\sigma^\dagger_M)^{\alpha^\prime {i}^\prime}\,,
\quad
\partial^{\alpha^\prime {i}^\prime }={\textstyle\frac{1}{\sqrt{2}}}\,\partial_{M}(\sigma^{M\dagger})^{\alpha^\prime {i}^\prime }\,,
\quad
\psi^{\alpha^\prime i}=-{\textstyle\frac{1}{\sqrt{2}}}\,Y^{M}(\sigma^\dagger_M)^{\alpha^\prime i}\,,
\end{equation}
 etc.

Then the vector superfield $Y^M$ is expressed in the form analogous to \p{XA}, with $\eta$ replaced by $\bar \eta$,
   \be
 \lb{YM}
 Y^M \ =\ y^M - \frac i2 \bar \eta^a_{NP} \theta^N \theta^P \bar \eta^a_{MQ} \dot y^Q -
 \frac 1{24} \epsilon_{NPQS} \theta^N \theta^P \theta^Q \theta^S \, \ddot y^M \ + \ {\rm fermion\ terms} \, .
  \ee
The constraints \p{constnew} imply, in particular, $\bar \eta^a_{MN} D_M Y^N = 0\,$.

By the same token as in \p{susy-char1-s-2a}, \p{susy-char1-tr-2a}, one can define the HKT supercharges with the complex structures
\begin{equation}\label{Inew}
({I}^{{a}})_{M}{}^{N}= \bar\eta^{{a}}_{MN}\,.
\end{equation}
 In fact, the structures \p{Inew} are obtained from the structures \p{compl-str-2} by the conjugation, $I \to \Omega I \Omega\,$, with $\Omega =
 {\rm diag}(1,1,1,-1)$. This corresponds to changing the sign of $x_4$.

For some purposes, the representation \p{Ynew} is more convenient than \p{Y}. For example, the proof of the absence of
mixed kinetic terms in the component Lagrangian for two interacting mutually mirror multiplets is obtained at almost no price.
Let us use the representations
\p{XA} and \p{YM} for $X^A$ and $Y^M$. Mixed bosonic kinetic terms in \p{act} could be generated by the structure
 \be
\lb{nomixed1}
 \sim \frac {\partial L}{\partial X^A \partial Y^M } \eta^a_{BC} \theta^B \theta^C  \eta^a_{AD} \dot x^D
 \bar \eta^b_{NP} \theta^N \theta^P  \bar \eta^b_{MQ} \dot y^Q \, ,
 \ee
which is, however, vanishing due to \p{ide} and \p{ide-imp}. Another potential source,
  \be
\lb{nomixed2}
 \sim \frac {\partial L}{\partial X^A} \theta^4 \ddot x^A -  \frac {\partial L}{\partial Y^M} \theta^4 \ddot y^M\,,
 \ee
also contributes zero due to the opposite signs of the last terms in \p{XA} and \p{YM}.

New conventions for the mirror multiplet imply the following form of the 8-dimensional CKT complex structures,
    \begin{equation}\label{structnew}
{I}^{{a}}=
\left(
\begin{array}{cc}
- \eta^{{a}}_{AB} & 0 \\
0 &  \bar\eta^{{a}}_{MN} \\
\end{array}
\right).
\end{equation}

A disadvantage of such conventions is that the OKT structure in a flat 8-dimensional space, or in a conformally flat 8-dimensional
manifold with harmonic conformal factor, becomes less explicit. The four extra supercharges can  still be expressed in the form
\p{h-susy-char1-s-2g}, \p{h-susy-char1-tr-2g} with the complex structures

\begin{equation}\label{dopstruct}
\tilde I=
\left(
\begin{array}{cc}
0 & -\Omega \\
\Omega & 0 \\
\end{array}
\right)
, \qquad
\tilde I^{{p}}=
\left(
\begin{array}{cc}
0 & \bar\eta^p \Omega \\
\Omega \bar\eta^p & 0 \\
\end{array}
\right),
\end{equation}
where $\Omega$ was defined above. They are obtained from \p{ad-compl-str-free} by conjugation with the matrix $\sim \mbox{diag} (1, \Omega)$.
The set \p{structnew} and \p{dopstruct} satisfies the $D=8$ Clifford algebra.
The representation \p{ad-compl-str-free} looks more elegant compared to \p{dopstruct}, and this determined
our choice of the conventions in the main text.

\end{document}